\documentclass[11pt]{article}
\usepackage[margin=1in]{geometry}
\usepackage{amsmath,amsthm,amsfonts,amssymb,amscd,bm,mathrsfs}
\usepackage{graphicx}
\usepackage[colorlinks=true, urlcolor=blue, linkcolor=blue, citecolor=magenta]{hyperref}
\usepackage{multirow}
\usepackage{enumitem}
\usepackage[T1]{fontenc}
\usepackage{aliascnt}

\allowdisplaybreaks

\usepackage[capitalise,nameinlink]{cleveref}
\Crefname{lemma}{Lemma}{Lemmas}
\Crefname{fact}{Fact}{Facts}
\Crefname{theorem}{Theorem}{Theorems}
\Crefname{corollary}{Corollary}{Corollaries}
\Crefname{claim}{Claim}{Claims}
\Crefname{example}{Example}{Examples}
\Crefname{problem}{Problem}{Problems}
\Crefname{definition}{Definition}{Definitions}
\Crefname{notation}{Notation}{Notations}
\Crefname{assumption}{Assumption}{Assumptions}
\Crefname{subsection}{Subsection}{Subsections}
\Crefname{section}{Section}{Sections}
\Crefformat{equation}{(#2#1#3)}
\providecommand{\footref}[1]{\textsuperscript{\ref{#1}}}

\newtheorem{theorem}{Theorem}[section]
\newtheorem*{theorem*}{Theorem}

\newcommand{\newaliastheorem}[2]{%
  \newaliascnt{#1}{theorem}%
  \newtheorem{#1}[#1]{#2}%
  \aliascntresetthe{#1}%
  \crefname{#1}{#2}{#2s}%
  \Crefname{#1}{#2}{#2s}%
}

\newaliastheorem{proposition}{Proposition}
\newtheorem*{proposition*}{Proposition}

\newaliastheorem{lemma}{Lemma}
\newtheorem*{lemma*}{Lemma}

\newaliastheorem{corollary}{Corollary}
\newtheorem*{corollary*}{Corollary}

\newaliastheorem{fact}{Fact}
\newtheorem*{fact*}{Fact}

\newaliastheorem{exercise}{Exercise}
\newtheorem*{exercise*}{Exercise}

\newaliastheorem{hypothesis}{Hypothesis}
\newtheorem*{hypothesis*}{Hypothesis}

\newaliastheorem{conjecture}{Conjecture}
\newtheorem*{conjecture*}{Conjecture}

\newaliastheorem{subclaim}{Claim}
\newaliastheorem{claim}{Claim}
\newtheorem*{claim*}{Claim}

\theoremstyle{definition}
\newaliastheorem{definition}{Definition}
\newaliastheorem{notation}{Notation}
\newaliastheorem{construction}{Construction}
\newaliastheorem{example}{Example}
\newaliastheorem{question}{Question}
\newaliastheorem{openquestion}{Open Question}
\newaliastheorem{problem}{Problem}
\newaliastheorem{protocol}{Protocol}
\newaliastheorem{assumption}{Assumption}

\newaliastheorem{exercise-easy}{Exercise}
\newaliastheorem{exercise-med}{Exercise}
\newaliastheorem{exercise-hard}{Exercise$^\star$}
\newaliastheorem{remark}{Remark}
\newtheorem*{remark*}{Remark}

\newaliastheorem{observation}{Observation}
\newtheorem*{observation*}{Observation}

\newcommand{\wtwo}{\mathsf{W}[2]}
\newcommand{\wone}{\mathsf{W}[1]}
\newcommand{\wpp}{\mathsf{W[P]}}
\newcommand{\fpt}{\mathsf{FPT}}
\newcommand{\hypoeth}{\mathsf{ETH}}
\newcommand{\seth}{\mathsf{SETH}}
\newcommand{\gapeth}{\textsf{Gap-}\hypoeth}

\newcommand{\maxcover}{\mathsf{MaxCover}}
\newcommand{\minlabel}{\mathsf{MinLabel}}

\newcommand{\poly}{\mathsf{poly}}
\newcommand{\polylog}{\mathsf{polylog}}

\makeatletter
\newcommand{\card}{\@ifstar{\card@star}{\card@nostar}}
\newcommand{\card@star}[1]{\left\lvert #1 \right\rvert}
\newcommand{\card@nostar}[1]{\lvert #1 \rvert}
\makeatother

\newcommand{\mmsatwo}{$k$-MMSA$_2$}
\newcommand{\mmsathree}{$k$-MMSA$_3$}
\newcommand{\mmsafour}{$k$-MMSA$_4$}

\title{On the Approximability of Parameterized Minimum Monotone Satisfying Assignment}
\author{
Venkatesan Guruswami\thanks{Simons Institute for the Theory of Computing, and Departments of EECS \& Mathematics, UC Berkeley. Email: \texttt{venkatg@berkeley.edu}. Research supported in part by NSF grant CCF-2211972, a DARPA grant under Contract No. HR0011262E031, and a Simons Investigator award.}
\and
Bingkai Lin\thanks{State Key Laboratory of Novel Software Technology, Nanjing University. Email: \texttt{lin@nju.edu.cn}.}
\and
Xuandi Ren\thanks{Department of EECS, UC Berkeley. Email: \texttt{xuandi\_ren@berkeley.edu}. Supported in part by NSF grant CCF-2228287 and V.G.'s Simons Investigator award.}
\and
Xin Zheng\thanks{State Key Laboratory of Novel Software Technology, Nanjing University. Email: \texttt{xinzheng@smail.nju.edu.cn}.}
}

\date{}

\begin{document}

\maketitle
\thispagestyle{empty}

\begin{abstract}
    The parameterized Minimum Monotone Satisfying Assignment ($k$-MMSA) problem asks whether a monotone Boolean circuit admits a satisfying assignment of Hamming weight at most $k$. The MMSA hierarchy is defined by allowing a bounded number of alternations between AND and OR gates in the circuit. While the polynomial-time approximability of the MMSA hierarchy has been studied extensively, much less is known in the parameterized setting. In particular, $k$-MMSA$_2$ is the well-known $k$-SetCover problem, whose parameterized inapproximability lies in the $\polylog(n)$ regime. In contrast, $k$-MMSA$_4$  captures $k$-MinLabel, for which known lower bounds give $\poly(n)$ inapproximability. Sandwiched by $k$-MMSA$_2$ and $k$-MMSA$_4$, the inapproximability of $k$-MMSA$_3$ remained comparatively unexplored. 
    
    In this paper, we give an FPT-time $O(2^k \log n)$-approximation algorithm for $k$-MMSA$_3$, suggesting that in the fixed-parameter regime, the third level of MMSA remains surprisingly close to the second level. Complementing this algorithm, we also give an FPT-time gap-preserving reduction from $k$-MMSA$_3$ to $k$-MMSA$_2$. Thus, stronger inapproximability for $k$-MMSA$_3$ would imply new hardness for $k$-MMSA$_2$, potentially offering a route around the current barriers for the latter problem.

    Revisiting Marx's reduction from $k$-MMSA$_t$ to gap $k$-MMSA$_{t+2}$, we also show that $k$-MMSA$_4$ admits no $n^{o(1)}$-factor FPT approximation unless W[2]=FPT, and no $n^{O(1/k)}$-factor approximation running in $n^{o(k)}$ time under ETH. These results separate the parameterized approximability behavior of the third and fourth levels and clarify where stronger inapproximability enters the $k$-MMSA hierarchy.
\end{abstract}

\clearpage

\section{Introduction}
Given a Boolean circuit $C$ consisting of only $\land$ and $\lor$ gates on $n$ input bits, the Minimum Monotone Satisfying Assignment (MMSA) problem asks to find a satisfying assignment of $C$ with minimum Hamming weight. 

MMSA was first introduced by Alekhnovich, Buss, Moran and Pitassi \cite{ABRP01}.  In a later work, Dinur and Safra \cite{DS04} defined a hierarchy of MMSA problems by restricting the number of alternations of the monotone circuit\footnote{Strictly speaking, \cite{DS04} studied  monotone \textit{formulae} rather than general \textit{circuits}. For fixed depth $t$, the two are interconvertible with polynomial blow-up. The distinction matters only in the unbounded-depth setting, e.g., in our later discussion on $\wpp$/$\mathsf{W[SAT]}$-completeness of unbounded-alternation $k$-MMSA.}. Specifically, MMSA$_t$ refers to the problem where the alternation starts with an $\land$, and has at most $t$ levels. This hierarchy turns out to be equivalent to an AND/OR scheduling hierarchy suggested by Goldwasser and Motwani \cite{GM97}. 

Looking into this hierarchy, MMSA$_1$ can be solved easily by simply outputting the width of the $\land$ gate. MMSA$_2$ is equivalent to the classical Set Cover  problem, by identifying each set with a variable and each element to cover with an inner disjunction. Thus, MMSA$_2$ inherits the $O(\log n)$-approximation \cite{joh74,Lov75,cha79} and $(1-o(1)) \ln n$ NP-inapproximability \cite{dinste14} of Set Cover. In \cite{DS04}, Dinur and Safra showed a $2^{\log ^{1-o(1)} n}$ inapproximability for MMSA$_3$ via a reduction from PCP, and placed the Label-Cover problem\footnote{This is the minimization Label-Cover problem in the Arora--Lund sense
\cite{AL97}, rather than the Projection-game formulation, where the value is the maximum
fraction of satisfied constraints.} between MMSA$_3$ and MMSA$_4$. Recently, Chlamt\'{a}\v{c}, Makarychev, and Vakilian~\cite{ChlamtacMV23}
revisited the polynomial-time approximability of this hierarchy. They gave an $\widetilde O(N^{1/3})$-approximation for MMSA$_3$ and
MMSA$_4$, and more generally an
	$\widetilde O(N^{1-\delta_t})$-approximation for MMSA$_t$,
	where $N$ is the total number of gates and variables, and $\delta_t=\frac13\cdot 2^{3-\lceil t/2\rceil}$.

In a separate line of work, Umans \cite{Umans99} studied the \textit{semantic} variant of Minimum Monotone Satisfying Assignment problem. In his formulation, the input is a circuit that is promised to accept a monotone set of strings. In contrast, \cite{ABRP01, DS04} studies the syntactic variant where the circuit itself is required to be monotone. Umans showed the NP-hardness of an $n^{1/5-\varepsilon}$-approximation for the semantic problem.  

In the parameterized world, there is even more connection between MMSA and other important problems. Parameterized by the target Hamming weight $k$, the unbounded-alternation circuit version of $k$-MMSA is $\wpp$-complete, whereas the corresponding formula version is $\mathsf{W[SAT]}$-complete; imposing constant-depth and weft-$t$ restrictions on weighted circuit satisfiability yields the usual $\mathsf{W}[t]$-hierarchy \cite{FlumGrohe06,downey2013fundamentals}. 
As for the bounded-depth variants, \mmsatwo{} is equivalent to {\sc $k$-Setcover}, the canonical $\wtwo$-complete problem \cite{DowneyF95,FlumGrohe06,downey2013fundamentals}. Inapproximability for {\sc $k$-Setcover} was known under different complexity assumptions: there are no $n^{o(k)}$ time (respectively, $n^{k-\varepsilon}$ time) $(\frac{\log n}{\log \log n})^{1/k}$-approximation under $\hypoeth$ (respectively, $\seth$) \cite{Lin19}; no FPT time $(\log n)^{o(1)}$-approximation under $\wone\neq\fpt$ \cite{KLM19,Lin19}\footnote{The factor in the statement of \cite{KLM19} is $(\log n)^{1/\poly(k)}$, but by a simple padding trick, we can actually get any FPT-time $(\log n)^{o(1)}$-factor inapproximability. See \Cref{lem:mmsa_padding} for a formal discussion.}; and no FPT time constant approximation under $\wtwo\neq\fpt$ \cite{LRSW23a}. 

Despite these substantial results, the parameterized inaprpoximability of {\sc $k$-Setcover} is not yet fully understood. Several central questions remain open and have been raised repeatedly in the literature \cite{feldmann2020survey, KLM19, Lin19, KL21, LRSW23a}. Can the FPT-time inapproximability factor be strengthened to match the $O(\log n)$-approximation by the greedy algorithm? Can the total-FPT-inapproximability\footnote{By total-FPT-inapproximability, we mean ruling out FPT-time $f(k)$-approximation algorithms for every computable function $f$.} be based  on the minimal assumption $\wtwo\neq\fpt$? Under $\hypoeth$, or even $\gapeth$, can one go beyond the current $(\log n)^{1/k}$-scale barrier for $n^{o(k)}$-time inapproximability?
These questions motivate the study of $k$-MMSA$_3$, the closest natural extension of $k$-MMSA$_2$ in the $k$-MMSA hierarchy. In particular, we ask whether the additional expressive power of $k$-MMSA$_3$ enables reductions that bypass the known barriers for {\sc $k$-SetCover}, and whether the techniques developed for  $k$-MMSA$_3$ can in turn shed light on {\sc $k$-SetCover} itself.

We remark that Dinur and Safra's approach \cite{DS04} for non-parameterized MMSA$_3$ relies on building one Boolean variable $L[x,a]$ for each (variable, value) pair in the PCP construction. In the yes case, selecting exactly one value per PCP variable yields a satisfying assignment of weight $n$, where $n$ is the number of variables. 
However, this approach does not extend to the parameterized setting: to keep the solution weight bounded by $k$, one would need to start from a \textit{parameterized} CSP with only $k$ variables, but for such CSPs, even just greedily satisfying the constraints adjacent to the maximum degree variable gives a $k^{O(1)}$ factor approximation\footnote{We may assume without loss of generality that the parameterized CSP has arity 2. Arity-$q$ CSPs reduce to arity 2 by the standard label-coverization trick, with only an $O(q)$-factor loss in the gap.}, which precludes deriving any inapproximability factor that grows with $n$.

With an extra level of alternation, \mmsafour{} 
possesses a stronger inapproximability. In his paper studying parameterized monotone and antimonotone circuit satisfiability, Marx \cite{Marx13} gave a gap-creating reduction from any $k$-MMSA$_t$ to $k$-MMSA$_{t+2}$, and thus ruled out any FPT-time $f(k)$-approximation for \mmsafour{} under $\wtwo\neq\fpt$. We remark that \mmsafour{} also captures the {\sc $k$-MinLabel} problem defined in \cite{CCK+17}. It was known in \cite{CCK+17} that, under $\gapeth$, any $f(k)$-approximation of {\sc $k$-MinLabel} requires $n^{\Omega(k)}$ time. It was also implicit in \cite{KLM19, KL21} that, {\sc $k$-MinLabel} is hard to approximate within an $n^{o(1)}$-factor\footnote{\label{fn:padding}Requires padding, see \Cref{lem:mmsa_padding}.} in FPT time under $\wone\neq\fpt$, or an $n^{1/\text{poly}(k)}$-factor in $n^{o(k)}$ time under $\hypoeth$, via reductions from {\sc $k$-MaxCover}.

\paragraph*{Our Contributions.}
In this paper, we give an FPT time $O(2^k \log n)$-approximation algorithm for \mmsathree{} (\Cref{thm:logn_alg_mmsathree}), which shows, in contrast to the non-parameterized world, that approximating \mmsathree{} is not significantly harder than approximating \mmsatwo{}. On the hardness side, we give a reduction from $(k,h)$-gap \mmsathree{} to $(2^k, h/k)$-gap \mmsatwo{} (\Cref{thm:mmsa3_to_mmsa2_reduction}).
This connection has two implications. First, it helps
explain why stronger inapproximability for $k$-MMSA$_3$ has not
appeared separately from \mmsatwo{}. Second, via this reduction, any sufficiently strong lower bound for $k$-MMSA$_3$ would yield new FPT inapproximability results for {\sc $k$-SetCover}. In particular, a $(\log n)^{\Omega(1)}$-factor FPT inapproximability for $k$-MMSA$_3$ under $\wone\neq\fpt$ would imply the same for
{\sc $k$-SetCover}. Similarly, under $\wtwo\neq\fpt$, an $\omega(2^k)$ FPT inapproximability for
$k$-MMSA$_3$ would translate into a
	superconstant FPT inapproximability for {\sc $k$-SetCover}, resolving long standing open problems.

We also revisit Marx’s reduction \cite{Marx13} and show that it yields stronger consequences than previously recognized. By carefully picking the parameters, we show \mmsafour{} is hard to approximate within an $n^{o(1)}$-factor\footref{fn:padding} in FPT time under $\wtwo\neq\fpt$, and hard to approximate within an $n^{O(1/k)}$-factor in $n^{o(k)}$ time under $\hypoeth$, improving the previous $f(k)$-factor \cite{Marx13} and $n^{1/\text{poly}(k)}$-factor \cite{KLM19,KL21}, respectively.

We summarize the (in)approximability results for parameterized MMSA hierarchy in \Cref{tab:k-mmsa-approx-hardness}.

\begin{table}[!htbp]
\centering
\renewcommand{\arraystretch}{1.25}
\setlength{\tabcolsep}{2pt}
\resizebox{\textwidth}{!}{%
\begin{tabular}{c|c|c|c|c|c|c|c}
\multirow{2}{*}{\textbf{Problems}}
& \multicolumn{3}{|c|}{\textbf{Approximability}}
& \multicolumn{4}{c}{\textbf{Inapproximability}} \\
\cline{2-8}
& \textbf{Ratio}
& \textbf{Runtime}
& \textbf{Reference}
& \textbf{Ratio}
& \textbf{Runtime}
& \textbf{Assumption}
& \textbf{Reference} \\
\hline
\multirow{4}{*}{\mmsatwo{}}
& \multirow{4}{*}{$O(\log n)$}
& \multirow{4}{*}{$\poly(n)$}
& \multirow{4}{*}{\shortstack[c]{\cite{joh74,Lov75}\\\cite{cha79}}}
& $\left(\dfrac{\log n}{\log\log n}\right)^{1/k}$
& \multirow{2}{*}{$n^{o(k)}$}
& \multirow{2}{*}{$\hypoeth$}
& \cite{Lin19} \\
\cline{5-5}\cline{8-8}
& & &
& $(\log n)^{1/\poly(k)}$
&
&
& \cite{KLM19} \\
\cline{5-8}
& & &
& $(\log n)^{o(1)}$
& \multirow{2}{*}{$\fpt$}
& $\wone \neq \fpt$
& \cite{KLM19,Lin19} \\
\cline{5-5}\cline{7-8}
& & &
& any $O(1)$
&
& $\wtwo \neq \fpt$
& \cite{LRSW23a} \\

\hline
	\mmsathree{}
& $O(2^k\log n)$
& $\fpt$
& \Cref{thm:logn_alg_mmsathree}
& \multicolumn{4}{|c}{Inherited from above} \\
\hline
\multirow{5}{*}{\mmsafour{}}
& \multirow{5}{*}{$n$}
& \multirow{5}{*}{$\poly(n)$}
& \multirow{5}{*}{Trivial}
& $n^{1/\poly(k)}$
& \multirow{2}{*}{$n^{o(k)}$}
& \multirow{2}{*}{$\hypoeth$}
& \cite{KLM19,KL21} \\
\cline{5-5}\cline{8-8}
& & &
& $n^{O(1/k)}$
&
&
& \Cref{cor:mmsafour_eth} \\
\cline{5-8}
& & &
& $n^{o(1)}$
& \multirow{3}{*}{$\fpt$}
& $\wone \neq \fpt$
& \cite{KLM19,KL21} \\
\cline{5-5}\cline{7-8}
& & &
& any $f(k)$
&
& \multirow{2}{*}{$\wtwo \neq \fpt$}
& \cite{Marx13} \\
\cline{5-5}\cline{8-8}
& & &
& $n^{o(1)}$
&
&
& \Cref{cor:mmsafour_wtwo} \\
\hline
\end{tabular}
}
\caption{Approximability and Inapproximability Results for Parameterized $k$-MMSA.}
\label{tab:k-mmsa-approx-hardness}
\end{table}

\section{Preliminaries}

\subsection{Perfect Hash Families}

\begin{definition}[$(n,m,\Sigma,h)$-perfect hash family \cite{FK84}]
    An $(n,m,\Sigma,h)$-perfect hash family is a family of  $m$ functions $\mathcal F=\{f_i:[n] \to \Sigma \mid i \in [m]\}$, such that for every subset $T\subseteq [n]$ where $|T| \le h$, there is a function $f \in \mathcal F$ such that
    $$\text{$\forall x,y \in T$ with $x\neq y$},\  f(x) \neq f(y).$$
\end{definition}

In other words, $\mathcal F$ is an $(n,m,\Sigma,h)$-perfect hash family if for any $h$ distinct elements in $[n]$, there is a function $f \in \mathcal F$ which maps them to $h$ distinct values.

When $|\Sigma| \ge h^2$, an $(n,m,\Sigma,h)$-perfect hash family can be constructed efficiently and deterministically:

\begin{lemma}[\cite{AYZ95}]
\label{lem:perfect_hash_family}
    An $(n,m,\Sigma,h)$-perfect hash family with $|\Sigma|=h^2, m=h^{O(1)} \log n$ can be constructed deterministically in time $h^{O(1)}\cdot n \log n$.
\end{lemma}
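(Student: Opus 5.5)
This is the Alon--Yuster--Zwick construction, which I would obtain by composing two hashing stages: a coarse stage that removes the dependence on $n$ and costs the $\log n$ factor, and a fine stage from a range of size $\poly(h)$ onto $[h^2]$.

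\textbf{Stage 1 (modular hashing).} Identify $[n]$ with integers. For a prime $p$, consider $g_p(x)=x \bmod p$. Fix a set $T$ with $|T|\le h$. The map $g_p$ fails to be injective on $T$ only if $p$ divides some difference $x-y$ with $x,y\in T$. There are fewer than $h^2$ such differences, each a nonzero integer of absolute value below $n$, so each has at most $\log n$ distinct prime divisors. Hence at most $h^2\log n$ primes are bad for $T$. Taking the first $h^2\log n+1$ primes therefore guarantees a good $p$ for every $T$. By the prime number theorem these primes are all $O(h^2\log n\log(h\log n))$. This gives $O(h^2\log n)$ functions into a range of size $P=\tilde O(h^2\log n)$. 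Enumerating them (sieve) and evaluating them on all of $[n]$ takes time $h^{O(1)} n\log n$, up to lower-order factors.

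\textbf{Stage 2 (range $P$ to $[h^2]$).} The range $P$ still depends on $\log n$, so I would compose with a second family $\mathcal G$ of functions $[P]\to[h^2]$ that is perfect for sets of size $h$, with $|\mathcal G|=h^{O(1)}\log P$. For this I would use the linear family $x\mapsto ((ax \bmod q)\bmod h^2)$, where $q>P$ is prime and $a\in[q]$. The standard FKS counting argument shows that, on average over $a$, the number of colliding pairs in $T$ is less than $\binom{h}{2}\cdot 2/h^2<1$. Hence some $a$ is injective on $T$. This uses $q=O(P)$ functions, which is $h^{O(1)}\log n$.

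The composed family consists of all $f=g\circ g_p$, giving $m=h^{O(1)}\log n$ functions into $[h^2]$, and it is perfect by construction. Every function is evaluated in $O(1)$ arithmetic per point, so the total time is $h^{O(1)}\cdot n\log n$.

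\textbf{Main obstacle.} The step that needs the most care is bounding $m$ so that no extra $\log\log n$ factor appears. Choosing every $a\in[q]$ gives about $h^2\log n\log\log n$ functions from Stage 1 multiplied by $q$, which is far too many. Instead, I would apply Stage 2 to the union of all Stage-1 ranges, treating it as one fixed universe of size $P$. Stage 2 then contributes a multiplicative factor depending on $P$, not on $n$ directly. To make that factor $h^{O(1)}$ rather than polylogarithmic in $n$, I would first iterate Stage 1 a second time, hashing $[P]$ modulo small primes down to a range $h^{O(1)}\log\log n$. I would absorb the leftover $\log\log n$ term as $\log\log n\le h^{O(1)}+\log n$-scale bookkeeping, or into the $h^{O(1)}$ factor when $h$ is large. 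If the bookkeeping still fails, I would simply cite the Alon--Yuster--Zwick explicit construction as stated, since the lemma is quoted from \cite{AYZ95}.
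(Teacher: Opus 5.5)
The paper gives no proof of this lemma; it is quoted from Alon--Yuster--Zwick. Your argument therefore has to stand on its own, and as written it does not reach $m=h^{O(1)}\log n$. Stage~1 and the FKS averaging are both fine. The problem is that the FKS family on a universe of size $P$ has $q-1=\Theta(P)$ members, not the $h^{O(1)}\log P$ you required of $\mathcal G$ a sentence earlier.

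The composed family therefore has $\Theta(h^2\log n)\cdot\Theta(P)=\Theta\bigl(h^4\log^2 n\,\log(h\log n)\bigr)$ members. The step ``$q=O(P)$, which is $h^{O(1)}\log n$, hence $m=h^{O(1)}\log n$'' multiplies two $\log n$ factors and reports only one. The time bound fails too, since you evaluate all $m$ functions on $[n]$. Already for $h=2$, where the lemma asks for $O(\log n)$ functions separating all pairs, you get $\Theta(\log^2 n\log\log n)$.

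Your repair does not close this. Another round of modular hashing over the first primes costs its own factor $h^2\log(\text{current universe})$, and the terminal FKS step is still linear in its universe. So a factor growing with $n$ survives any fixed number of rounds, while $\Theta(\log^* n)$ rounds would compound the $h^2$ factors into $h^{\Theta(\log^* n)}$. Absorbing the leftover into $h^{O(1)}$ is legitimate only when $h$ itself grows with $n$, not uniformly in $h$ as the lemma asserts. Falling back on the citation is not a proof.

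Two changes are needed, and the second is the real missing idea.

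First, use only primes $p\ge Q$ with $Q=\Theta(h^2\log n)$. A nonzero difference below $n$ has fewer than $\log n/\log Q$ such prime factors, so $O(h^2\log n/\log Q)$ primes in $[Q,2Q)$ suffice. Stage~1 then maps into a universe of size $P=O(h^2\log n)$.

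Second, the family $[P]\to[h^2]$ must have size $h^{O(1)}\log P$, that is, it must be a code of rate $h^{-O(1)}$, not linear in $P$ as FKS is. Because $P=\poly(h\log n)$ is tiny, such a family can be found deterministically. Take $L=O(h^2\log P)$ random maps $[P]\to[h^2]$. By Chernoff and a union bound over the $\binom P2$ pairs, with positive probability every pair collides in fewer than $L/\binom h2$ of them. This can be derandomized by conditional expectations in $\poly(h,\log n)$ time. Your own union bound over pairs then gives a coordinate injective on every $h$-set. An explicit code of rate $h^{-O(1)}$ with pairwise agreement below $1/\binom h2$ would serve equally well.

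The composed family then has $O(h^2\log n/\log Q)\cdot O(h^2\log P)=O(h^4\log n)$ functions, each evaluable in $O(1)$ time, within the required time bound. For the paper's own use, where $h=n^{\Theta(1/k)}$ and only $m\le n^{O(1)}$ matters, your weaker bound would actually suffice. It still does not prove the lemma as stated.
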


\subsection{Parameterized Problems}

We begin with a definition of the parameterized version of the MMSA problem.

\begin{definition}[$k$-MMSA$_t$]
For any $t,k\in\mathbb{N}^+$, $k$-MMSA$_t$ asks whether there is a satisfying assignment with Hamming weight at most $k$, for a monotone circuit $C$ of the following form.
\[
C(x_1,\ldots,x_n)=\left\{\begin{aligned}
        &\bigwedge_{i_1=1}^n \bigvee_{i_2=1}^n \cdots \bigwedge_{i_t=1}^n x_{w_{i_1,\ldots,i_t}}, & & t \text{ is odd},\\
        &\bigwedge_{i_1=1}^n \bigvee_{i_2=1}^n \cdots \bigwedge_{i_{t-1}=1}^n\bigvee_{i_t=1}^n x_{w_{i_1,\ldots,i_t}}, & & t \text{ is even},
    \end{aligned}\right.
\]
    where for every $i_1,\ldots,i_t \in [n]$, $w_{i_1,\ldots,i_t}$ is an index in $[n]$\footnote{We generally allow the width of each gate to be polynomial in the number of input bits, but for simplicity, we can pad unused bits and thus use $n$ for both of them.}.

    We write $\operatorname{SAT}(C)$ to denote the set of satisfying assignments of $C$, and write $\operatorname{OPT}(C)$ to denote the minimum weight of any satisfying assignment.

    For any $h>k$, we write $(k,h)$-gap MMSA$_t$ for the task of distinguishing between $\operatorname{OPT}(C)\le k$ and $\operatorname{OPT}(C)>h$,
    or equivalently, finding a satisfying assignment of $C$ with weight at most $h$ given the promise that $\operatorname{OPT}(C)\le k$.
\end{definition}
For the special case of \mmsathree{}, it will be convenient to replace the ordered list of literals feeding into each bottom conjunction by the set of input indices appearing in that gate. Specifically, a \mmsathree{} instance can be written as
    $$
    C(x)=\bigwedge_{i=1}^{n}\bigvee_{j=1}^{n}\bigwedge_{t\in N_{i,j}}x_t,
    $$
    where $N_{i,j}\subseteq[n]$ for every $i,j\in[n]$.
    
The following lemma shows that given a $k$-MMSA$_3$ instance, one can prune any bottom conjunction term with $|N_{i,j}|>k$ without hurting the gap.

\begin{lemma}
\label{lem:prune}
    There is a polynomial-time deterministic algorithm which, given as input a \mmsathree{} instance
    $$
    C(x)=\bigwedge_{i=1}^{n}\bigvee_{j=1}^{n}\bigwedge_{t\in N_{i,j}}x_t,
    $$
    outputs a \mmsathree{} instance
    \[
    C'(x)=\bigwedge_{i=1}^{n}\bigvee_{j=1}^n\bigwedge_{t\in N'_{i,j}}x_t,
    \]
    where for every $i\in[n]$ and $j \in S_i$, $|N_{i,j}| \le k$. Furthermore,
    \begin{itemize}
        \item If $\operatorname{OPT}(C) \le k$, then $\operatorname{OPT}(C') \le k$.
        \item $\operatorname{SAT}(C')\subseteq \operatorname{SAT}(C)$. In particular, for every $h>k$, if $\operatorname{OPT}(C)>h$, then $\operatorname{OPT}(C')>h$.
    \end{itemize}
\end{lemma}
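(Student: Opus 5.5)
The construction is simple: any bottom conjunction with more than $k$ literals can never be satisfied by an assignment of weight at most $k$, so we make such terms unsatisfiable. The paper's statement has a small typo ($S_i$, and $N_{i,j}$ in place of $N'_{i,j}$). The intended claim is that every surviving term $N'_{i,j}$ has size at most $k$.

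\textbf{Construction.} For each $i\in[n]$, let
\[
S_i=\{j\in[n]: |N_{i,j}|\le k\}.
\]
The output circuit $C'$ keeps exactly the terms indexed by $S_i$:
\[
C'(x)=\bigwedge_{i}\bigvee_{j\in S_i}\bigwedge_{t\in N_{i,j}}x_t .
\]
To respect the fixed-width syntax, set $N'_{i,j}=N_{i,j}$ for $j\in S_i$. For $j\notin S_i$, replace the deleted disjunct by a copy of some kept disjunct $N_{i,j_0}$ with $j_0\in S_i$.

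If $S_i=\emptyset$, we have $\operatorname{OPT}(C)>k$, since clause $i$ then needs more than $k$ ones. In this degenerate case any fixed unsatisfiable-at-weight-$k$ instance works, for example one that forces $k+1$ distinct variables. It must still satisfy $\operatorname{SAT}(C')\subseteq\operatorname{SAT}(C)$, so it is simplest to output $\bigwedge_{t}x_t$, i.e. $N'_{i,j}=[n]$. This term has size $n>k$, but the size condition can be read as vacuous when $S_i=\emptyset$. Alternatively, this case can be handled by noting the instance is trivially a no-instance. Checking the sizes $|N_{i,j}|$ takes polynomial time.

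\textbf{Completeness.} Let $x$ satisfy $C$ with weight at most $k$. For every $i$, some disjunct $j$ has all of $N_{i,j}$ set to $1$. Hence $|N_{i,j}|\le |x|\le k$, so $j\in S_i$ and the disjunct survives in $C'$. Therefore $x$ satisfies $C'$, and $\operatorname{OPT}(C')\le k$.

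\textbf{Soundness.} Every disjunct of $C'$ is either an original disjunct $N_{i,j}$ of clause $i$, or (in the degenerate case) the term $\bigwedge_{t\in[n]}x_t$, which implies every disjunct of $C$. So each clause of $C'$ implies the corresponding clause of $C$, which gives $\operatorname{SAT}(C')\subseteq\operatorname{SAT}(C)$. The consequence for $\operatorname{OPT}(C)>h$ follows immediately, since any satisfying assignment of $C'$ is one of $C$. The only delicate point is the bookkeeping for empty $S_i$ and fixed widths, which the padding above resolves.
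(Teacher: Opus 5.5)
Your proposal is correct and follows the paper's proof: you delete the bottom conjunctions with $|N_{i,j}|>k$, pad each clause by duplicating a surviving disjunct, observe that the witness term of a weight-$\le k$ assignment always survives, and note that $C'\le C$ pointwise. Your extra handling of the case $S_i=\emptyset$ covers a corner that the paper's ``duplicate its inputs'' padding silently skips; if you write $\bigwedge_t x_t$ with singleton bottom terms ($N'_{i,j}=\{i\}$ for all $i,j$), it even meets the size bound, so you need not treat that bound as vacuous.
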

\begin{proof}
    $C'$ is obtained from $C$ by simply removing every bottom conjunction with $|N_{i,j}|>k$. That is, let $S_i:=\{j\in[n]\mid |N_{i,j}|\le k\}$, and let
    $$
        C'(x):=\bigwedge_{i=1}^{n}\bigvee_{j\in S_i}\bigwedge_{t\in N_{i,j}}x_t.
    $$
    To pad the size of the second layer to $n$, we simply duplicate its inputs.

    Obviously, $C(x)\ge C'(x)$, hence $\operatorname{SAT}(C')\subseteq \operatorname{SAT}(C)$.
    On the other hand, suppose $\operatorname{OPT}(C) \le k$, and let $x$ be a satisfying assignment to $C$ with $\|x\|_0\le k$.
    Since $|N_{i,j}|>k$ for every $i\in[n]$ and $j\notin S_i$, we have $\bigwedge_{t\in N_{i,j}}x_t=0$, thus
    $$\begin{aligned}
        C'(x)&=\bigwedge_{i=1}^{n}\left(\bigg(\bigvee_{j\in S_i}\bigwedge_{t\in N_{i,j}}x_t\bigg)\vee\bigg(\bigvee_{j\notin S_i}\bigwedge_{t\in N_{i,j}}x_t\bigg)\right) \\
        &=\bigwedge_{i=1}^{n}\bigvee_{j=1}^{n}\bigwedge_{t\in N_{i,j}}x_t=C(x).
    \end{aligned}$$
    Hence $x$ is also a satisfying assignment to $C'$, which implies that $\operatorname{OPT}(C') \le k$.
\end{proof}

We next introduce two closely related parameterized problems: $k$-MaxCover and $k$-MinLabel.

\begin{definition}[$k$-MaxCover and $k$-MinLabel]
\label{def:maxcover_minlabel}
    $k$-MaxCover and $k$-MinLabel are both defined on a partitioned bipartite graph $G=(A,B,E)$ where
    $A=A_1\dot\cup \ldots\dot\cup A_k, B= B_1\dot\cup \ldots\dot\cup B_m$ and $m \le n$, with different optimizing objectives.

    We say a set $X \subseteq A$ (possibly contain multiple vertices from the same part) covers a right part $B_j$ if there exists some $b \in B_j$, such that for every $i \in [k]$, there exists a vertex $a_i \in A_i \cap X$ which is connected to $b$.
    
    The goal of $k$-MaxCover is to pick one vertex $a_i$ from each left part $A_i$, to cover the maximum fraction of right parts $B_j$, while the goal of $k$-MinLabel is  to pick minimum-sized subset $X \subseteq A$, to cover every right part $B_j$. We write the two values as $\maxcover(G)$ and $\minlabel(G)$, respectively. Specifically,
    \[
\begin{gathered}
    \maxcover(G)
    =
    \frac{1}{m}\left(
    \max_{a_1 \in A_1,\ldots,a_k \in A_k}
    \card*{
    \left\{
        j \in [m]
        \;\big|\;
        \exists b \in B_j
        \text{ such that }
        \forall i \in [k],\ (a_i,b) \in E
    \right\}
    }\right),
    \\[2mm]
    \minlabel(G)
    =
    \min_{X \subseteq A
        , X \text{ covers every } B_j,\ j \in [m]}
        \card*{X}.
\end{gathered}
\]
\end{definition}

There is a canonical reduction from $(1,\varepsilon)$-gap $k$-MaxCover to $(1,\varepsilon^{-1/k})$-gap $k$-MinLabel, as formalized in \cite{CCK+17,KLM19}. We omit the proof here and refer the reader to \cite[Proposition A.1]{KLM19}.

\begin{lemma}[\cite{CCK+17,KLM19}]
\label{lem:maxcover2minlabel}
    Let $G=(A,B,E)$ be a $k$-MaxCover/$k$-MinLabel instance with $A=A_1\dot\cup\cdots\dot\cup A_k$ and
    $B=B_1\dot\cup\cdots\dot\cup B_m$. 
    \begin{itemize}
        \item If $\maxcover(\Gamma)=1$, then $\minlabel(\Gamma)=k$.
        \item If $\maxcover(\Gamma) \le \varepsilon$ for some $\varepsilon>0$, then $\minlabel(\Gamma)\ge (1/\varepsilon)^{1/k}\cdot k$.
    \end{itemize}
\end{lemma}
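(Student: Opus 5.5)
The first item is direct. If $\maxcover(G)=1$, fix $a_1\in A_1,\ldots,a_k\in A_k$ that cover every $B_j$. Then $X=\{a_1,\ldots,a_k\}$ covers every right part, so $\minlabel(G)\le k$. For the reverse inequality, observe that any $X$ covering even one $B_j$ must contain a vertex from every $A_i$, because the definition requires some $a_i\in A_i\cap X$ adjacent to $b$. So every feasible $X$ has $|X|\ge k$, and therefore $\minlabel(G)=k$.

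For the second item I will argue by contrapositive and use an averaging (random sampling) argument. Suppose $X$ covers every $B_j$ and $|X|=\ell$. Write $X_i:=X\cap A_i$ and $\ell_i:=|X_i|$, so that $\sum_i \ell_i=\ell$ and each $\ell_i\ge 1$.

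For each $j\in[m]$, fix a witness $b_j\in B_j$ together with vertices $a_i^{(j)}\in X_i$, for each $i\in[k]$, that are adjacent to $b_j$. Now choose $a_i\in X_i$ independently and uniformly at random for each $i$. Consider the event that $a_i=a_i^{(j)}$ for all $i$. On this event, $(a_1,\ldots,a_k)$ covers $B_j$ via $b_j$, and the event has probability $\prod_i \ell_i^{-1}$. By linearity of expectation, the expected fraction of covered right parts is at least $\prod_i \ell_i^{-1}$. Hence some choice of $(a_1,\ldots,a_k)$ achieves this, and so $\maxcover(G)\ge \prod_{i=1}^k \ell_i^{-1}$.

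By AM--GM, $\prod_i \ell_i\le (\ell/k)^k$, which gives $\maxcover(G)\ge (k/\ell)^k$. If $\maxcover(G)\le\varepsilon$, then $(k/\ell)^k\le\varepsilon$, that is, $\ell\ge k\,(1/\varepsilon)^{1/k}$. Since this holds for every feasible $X$, it holds for $\minlabel(G)$ as well.

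The only delicate step is the averaging. It must use a single fixed witness $b_j$ and a single fixed tuple per right part, so that the probability bound is exactly $\prod_i\ell_i^{-1}$ and no dependence among different $j$ is needed. Linearity of expectation handles that cleanly. Everything else is bookkeeping, and the AM--GM step is where the $k$-th root in the bound comes from.
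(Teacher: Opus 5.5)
Your proof is correct, and it is essentially the standard argument that the paper omits and cites from \cite[Proposition A.1]{KLM19}: pick one vertex uniformly at random from each $X\cap A_i$, observe that each $B_j$ is covered with probability at least $\prod_i |X\cap A_i|^{-1}$ via its fixed witness tuple, and finish with AM--GM to get $|X|\ge k(1/\varepsilon)^{1/k}$. Your first item matches the standard argument as well: any set covering a part must meet every $A_i$, so it has size at least $k$, which implicitly uses $m\ge 1$.
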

We remark that $k$-MinLabel can be written as a restricted special case of
MMSA$_4$. Therefore, the inapproxiambility of $k$-MinLabel applies to $k$-MMSA$_4$ as well. This observation is formalized in the following lemma.

\begin{lemma}
\label{lem:minlabel_mmsafour}
    Let $G=(A,B,E)$ be a $k$-MinLabel instance with
    $A=A_1\dot\cup\cdots\dot\cup A_k$ and
    $B=B_1\dot\cup\cdots\dot\cup B_m$. There is a polynomial-time construction of
    a $k$-MMSA$_4$ circuit $C_G$ such that
    \[
        \operatorname{OPT}(C_G)=\minlabel(G).
    \]
\end{lemma}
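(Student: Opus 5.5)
We build $C_G$ directly from the covering condition in \Cref{def:maxcover_minlabel}, using one input variable $x_a$ for each left vertex $a\in A$. A set $X\subseteq A$ covers $B_j$ exactly when there is some $b\in B_j$ such that, for every $i\in[k]$, some $a\in A_i\cap X$ satisfies $(a,b)\in E$. We therefore set
\[
C_G(x)=\bigwedge_{j=1}^{m}\ \bigvee_{b\in B_j}\ \bigwedge_{i=1}^{k}\ \bigvee_{a\in A_i,\,(a,b)\in E} x_a .
\]
This formula has four alternating levels and starts with $\land$, so it has the $k$-MMSA$_4$ shape. To match the syntactic definition with every gate of width exactly $n$, we pad as follows.

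\textbf{Padding.}
\begin{enumerate}
\item Duplicate inputs of any gate that is too narrow; repeating inputs of an $\land$ or $\lor$ does not change its value.
\item If some bottom $\lor$ is empty (i.e.\ $b$ has no neighbour in some $A_i$), its conjunct is identically $0$, and we drop $b$ from the disjunction over $B_j$.
\item If all of $B_j$ is dropped, then no $X$ covers $B_j$. In that case we make the circuit unsatisfiable, for instance by keeping a gate that is constant $0$ through a fresh dummy variable that appears in no gate. We interpret $\minlabel(G)=\infty$ here, and consistently $\operatorname{OPT}(C_G)=\infty$.
\item Let $N:=|A|+|B|+k+m$ and take $N$ as the common width bound. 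Any extra padding variables occur in no gate, so setting them to $1$ never helps.
\end{enumerate}

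\textbf{Equivalence.} For $X\subseteq A$, let $x=\mathbf 1_X$. The innermost disjunction for $(b,i)$ is $1$ iff $X\cap A_i$ contains a neighbour of $b$. The conjunction over $i$ is $1$ iff every part $A_i$ supplies such a neighbour. The disjunction over $b\in B_j$ is $1$ iff $X$ covers $B_j$. The outer conjunction is $1$ iff $X$ covers every $B_j$. Hence the satisfying assignments of $C_G$ supported on $A$ are exactly the indicator vectors of covering sets, and weight equals $|X|$. Any satisfying assignment that also sets padding variables to $1$ stays satisfying once those variables are set back to $0$, since they occur in no gate. This gives $\operatorname{OPT}(C_G)=\minlabel(G)$.

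The construction is clearly polynomial time. The only delicate step is the bookkeeping in the padding: widths must be equalized without changing semantics, and degenerate empty gates must be handled correctly.
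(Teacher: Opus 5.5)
Your circuit and equivalence argument are the paper's: it writes the same four-level formula $\bigwedge_{j}\bigvee_{b\in B_j}\bigwedge_{i}\bigvee_{a\in A_i\cap N_G(b)}x_a$, pads it with dummy gates and variables, and observes the one-to-one correspondence between covering sets $X$ and satisfying assignments. Your padding bookkeeping is more careful than the paper's.

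Item 3, however, does not work as written. In the syntactic $k$-MMSA$_4$ form every leaf is a variable and every gate has positive width, so the all-ones input always satisfies the circuit. No gate fed by a variable can be constant $0$, and a variable \emph{appearing in no gate} cannot make any gate $0$. Hence $\operatorname{OPT}(C_G)=\infty$ is unattainable. The infeasible case $\minlabel(G)=\infty$ should be excluded up front. It is detectable in polynomial time by checking whether $X=A$ covers every $B_j$, and the paper implicitly makes this assumption. The alternative is to relax the syntax to allow an empty, constant-$0$ gate.
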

\begin{proof}
For every left vertex $a\in A$, introduce one Boolean variable $x_a$. For a
right vertex $b\in B$, write
\[
    N_G(b)=\{a\in A\mid (a,b)\in E\}.
\]
Define the monotone circuit
\[
    C_G(x)
    =
    \bigwedge_{j\in[m]}
    \bigvee_{b\in B_j}
    \bigwedge_{i\in[k]}
    \bigvee_{a\in A_i\cap N_G(b)}
    x_a .
\]
After padding dummy gates/variables to make the width of each gate equal to the number of variables, $C_G(x)$ becomes a $k$-MMSA$_4$ circuit and can be constructed in polynomial time. It's also easy to check the one-to-one correspondence between a subset $X \subseteq A$ covering every $B_j$ in $G$ and a satisfying assignment of $C_G$. \qedhere
\end{proof}

We next present a canonical padding lemma which boosts any FPT-time inapproximability for $k$-MMSA with factor $f(n)^{1/g(k)}$ to $f(n)^{o(1)}$.

\begin{lemma}[Boosting the inapproximability of $k$-MMSA$_t$ by padding]
\label{lem:mmsa_padding}
Fix $t\in\mathbb N^+$. Let $f:\mathbb N\to\mathbb R^+$ be a computable, nondecreasing, unbounded, and padding-stable function in the following sense: for every constant $a\in\mathbb N$, there is a sufficiently large $n_0(a)$ such that
\[
    f(n+a)\le f(n)^2 \qquad\forall n\ge n_0(a).
\]
Let $g:\mathbb N\to\mathbb R^+$ be a computable,  nondecreasing and unbounded function. Suppose for $k$-MMSA$_t$, no FPT algorithm can distinguish between
\[
    \operatorname{OPT}(C)\le k
    \qquad\text{and}\qquad
    \operatorname{OPT}(C)>k\cdot f(n)^{1/g(k)}.
\]
Then, for every computable function $\alpha:\mathbb N\to\mathbb R^+$ with $\alpha(k)=o(1)$, no FPT algorithm can distinguish between
\[
    \operatorname{OPT}(C)\le k
    \qquad\text{and}\qquad
    \operatorname{OPT}(C)>k\cdot f(n)^{\alpha(k)}.
\]
In other words, $k$-MMSA$_t$ doesn't admit an FPT-time $f(n)^{o(1)}$-approximation.
\end{lemma}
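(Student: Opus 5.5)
We argue by contradiction. Suppose some FPT algorithm $\mathcal A$, running in time $T(k)\cdot n^{O(1)}$, distinguishes $\operatorname{OPT}(C)\le k$ from $\operatorname{OPT}(C)>k\cdot f(n)^{\alpha(k)}$, where $\alpha(k)=o(1)$ is computable. From $\mathcal A$ we build an FPT algorithm for the gap problem in the hypothesis, i.e.\ one separating $\operatorname{OPT}\le k$ from $\operatorname{OPT}>k\cdot f(n)^{1/g(k)}$. The tool is padding. Given an instance $C$ with parameter $k$, pick a multiplier $r=r(k)\in\mathbb N$. Let $C_r$ be the conjunction of $r$ copies of $C$ on pairwise disjoint variable sets. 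The top gate of $C$ is an $\wedge$, so the top gates of the copies merge into one $\wedge$ gate and $C_r$ is still a $k$-MMSA$_t$ circuit. Its size is $n'=rn$; we pad dummy gates and variables so all widths equal $n'$.

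Next we relate the optima. The variable sets are disjoint, so a minimum satisfying assignment of $C_r$ restricts to a satisfying assignment of each copy, which gives
\[
\operatorname{OPT}(C_r)=r\cdot\operatorname{OPT}(C).
\]
In the yes case this means $\operatorname{OPT}(C_r)\le rk=:k'$. In the no case it means
\[
\operatorname{OPT}(C_r)>rk\cdot f(n)^{1/g(k)}=k'\cdot f(n)^{1/g(k)}.
\]
We then run $\mathcal A$ on $(C_r,k')$. This is correct provided
\[
f(n)^{1/g(k)}\ge f(rn)^{\alpha(rk)}.
\]

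The main obstacle is choosing $r(k)$ so that this inequality holds, and padding stability is what handles it. The point is that $f(rn)$ must be bounded by a fixed power of $f(n)$. I would first try to apply the stated property with $a=(r-1)n$, but that does not fit the hypothesis, because $a$ must be a constant. The workable route is to apply the hypothesis iteratively, or to handle the increments with care; if that fails, I would switch to padding by dummy gates, whose count depends only on $k$. A cleaner alternative is to take copies that share the same input width, so that the circuit size becomes $n+a(k)$ with $a$ depending only on $k$. Then $f(n+a(k))\le f(n)^2$ for $n\ge n_0(a(k))$, and small $n$ are handled by brute force in time depending on $k$.

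Either way, the target is $f(n')\le f(n)^{c}$ for a constant $c$. Given that bound, $\alpha(k)\to 0$ lets us choose $r$ computably so that $c\cdot\alpha(rk)\le 1/g(k)$. Since $\alpha$ is computable, $r(k)$ is computable as well. The total running time is then $T(r(k)\,k)\cdot(r(k)\,n)^{O(1)}$ plus the brute-force cost for small $n$. This is FPT, which contradicts the hypothesis.
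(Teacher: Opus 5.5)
\textbf{There is a genuine gap.} It sits exactly at the step you call the main obstacle, and none of your proposed fixes resolves it.

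With $r$ disjoint copies the new instance has $n'=rn$ variables, so you need $f(rn)^{\alpha(rk)}\le f(n)^{1/g(k)}$. Padding stability only controls $f(n+a)$ for a \emph{constant} additive $a$. Iterating it from $n$ to $rn$ takes $(r-1)n/a$ steps and yields only $f(rn)\le f(n)^{2^{(r-1)n/a}}$.

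In fact no bound $f(rn)\le f(n)^{c}$ follows from the hypotheses. Take $\log_2 f$ flat for long stretches and, on $[n_\ell,2n_\ell]$, increasing linearly from $v_\ell$ to $\ell v_\ell$, with $n_\ell$ growing fast enough. This $f$ is computable, nondecreasing, unbounded, padding-stable, and even satisfies $f(n)\le n$. Yet $f(2n_\ell)=f(n_\ell)^{\ell}$, so for every fixed $r\ge 2$ your inequality fails for infinitely many $n$, and the reduction cannot be certified on those sizes. Your multiplicative route does work for concrete choices like $f(n)=n$ or $\log n$, where $f(rn)\le f(n)^2$ eventually, but not for the lemma as stated.

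Your fallbacks do not close this. ``Copies that share the same input width'' is not a construction: disjoint copies cost $rn$ variables, while copies on shared variables leave $\operatorname{OPT}$ unchanged. Any padding that adds only $a(k)$ variables can shift $\operatorname{OPT}$ only additively. Your closing computation (no-case threshold $rk\cdot f(n)^{1/g(k)}$, choosing $r$ with $c\,\alpha(rk)\le 1/g(k)$) then no longer applies.

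\textbf{The missing idea} is additive padding, plus an argument that absorbs the resulting dilution of the gap. Set $\widehat C=C\wedge y_1\wedge\cdots\wedge y_{K-k}$ with fresh variables, each realized as a dummy depth-$t$ path. Then $N=n+K-k$ and $\operatorname{OPT}(\widehat C)=\operatorname{OPT}(C)+K-k$, where $K=K(k)$ is chosen computably with $\alpha(K)\le 1/(4g(k))$.

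The yes case gives $\operatorname{OPT}(\widehat C)\le K$. The no case only gives $\operatorname{OPT}(\widehat C)>K-k+k\,f(n)^{1/g(k)}$, so the ratio has lost a factor of roughly $K/k$.

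To absorb this, let $y:=f(n)^{1/(2g(k))}$. For $n\ge n_0(K-k)$, padding stability gives $K f(N)^{\alpha(K)}\le K f(n)^{2\alpha(K)}\le Ky$. The inequality $Ky\le K-k+ky^2$ is equivalent to $(y-1)\bigl(ky-(K-k)\bigr)\ge 0$. This holds as soon as $y\ge K/k$, which is true for all $n$ beyond a threshold depending only on $k$, because $f$ is unbounded.

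Spending half of the exponent $1/g(k)$ is what pays for the additive loss. For each $k$, the finitely many small $n$ are brute-forced, and the whole procedure is FPT.
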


\begin{proof}
Fix $\alpha(k)=o(1)$ and choose $K$ sufficiently large such that $\alpha(K)\le 1/(4g(k))$. Given an MMSA$_t$ circuit $C$ on $n$ variables, add $K-k$ fresh variables which are forced to be one in any satisfying assignment:
\[
    \widehat C(x,y)=C(x)\wedge y_1\wedge\cdots\wedge y_{K-k}.
\]
By representing each new variable as a dummy length-$t$ unary path, the new instance $\widehat C$ is still a MMSA$_t$ instance with new parameter $K$. Furthermore, $\widehat C$ has $N=n+K-k$ many variables, satisfies
\[
    \operatorname{OPT}(\widehat C)=\operatorname{OPT}(C)+K-k,
\]
and can be constructed in FPT time. Suppose for contradiction that there is an FPT-time algorithm which can distinguish between
\[
    \operatorname{OPT}(\widehat C)\le K
    \qquad\text{and}\qquad
    \operatorname{OPT}(\widehat C)>K\cdot f(N)^{\alpha(K)}.
\]
Since for sufficiently large $N$ depending only on $K$, padding-stability gives $f(N)=f(n+K-k)\le f(n)^2$, and $f$ is unbounded, we have
\[
 K\cdot f(N)^{\alpha(K)}
    \le K\cdot f(n)^{2\alpha(K)}
    \le K\cdot f(n)^{1/(2g(k))}
    \le K-k+k\cdot f(n)^{1/g(k)}.
\]
Thus, the same algorithm can also distinguish between
\[
    \operatorname{OPT}(C)\le k
    \qquad\text{and}\qquad
    \operatorname{OPT}(C)>k\cdot f(n)^{1/g(k)}
\]
in FPT time, a contradiction.
\end{proof}

\section[An FPT-time O(2 to k log n)-Approximation for k-MMSA3]{An FPT-time $O(2^k\log n)$-Approximation for \mmsathree{}}
\label{sec:mmsa3_alg}

\begin{theorem}
\label{thm:logn_alg_mmsathree}
    There is a deterministic FPT-time $O(2^k\log n)$-approximation algorithm for \mmsathree{} on $n$ input bits. 
\end{theorem}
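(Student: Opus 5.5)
The plan is to reduce \mmsathree{} to (weighted) Set Cover and run the greedy algorithm. The key point is that a weight-$k$ solution can use only $2^k$ distinct bottom conjunctions. First apply \Cref{lem:prune}, so that every surviving bottom conjunction $N_{i,j}$ has $|N_{i,j}|\le k$. This is safe: if $\operatorname{OPT}(C)\le k$ then $\operatorname{OPT}(C')\le k$, and $\operatorname{SAT}(C')\subseteq\operatorname{SAT}(C)$, so any assignment found for $C'$ is valid for $C$. To handle $\operatorname{OPT}(C)=k'<k$ cleanly, run the whole procedure for every $k'\in\{1,\dots,k\}$ and output the lightest valid assignment; this costs only a factor $k$ in running time.

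Next, build a Set Cover instance. The universe is the set of outer clauses $[n]$. Each distinct subset $T\subseteq[n]$ that occurs as some $N_{i,j}$ becomes one candidate set, covering
\[
\{\,i\in[n] \mid \exists j:\ N_{i,j}=T\,\}.
\]
There are at most $n^2$ candidate sets, so the instance has polynomial size. Any subfamily $\mathcal T$ covering all clauses yields a satisfying assignment: set $x_t=1$ for $t\in\bigcup_{T\in\mathcal T}T$. Every clause $i$ then has a fully satisfied bottom conjunction. The weight of this assignment is at most $\sum_{T\in\mathcal T}|T|\le k\,|\mathcal T|$.

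The central step is bounding the optimum of this Set Cover instance. Let $X$ be a satisfying assignment with $|X|=k'$. For each clause $i$ choose some $j$ with $N_{i,j}\subseteq X$. Every chosen set is a subset of $X$, so at most $2^{k'}$ distinct candidate sets are used, and together they cover the universe. Hence the Set Cover optimum is at most $2^{k'}$. The greedy algorithm \cite{joh74,Lov75,cha79} then returns a cover of size $O(2^{k'}\log n)$. The corresponding assignment has weight at most $k'\cdot O(2^{k'}\log n)=O(2^{k'}\log n)\cdot\operatorname{OPT}$.

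For a slightly cleaner constant, one can instead use weighted greedy set cover with cost $|T|$ for each set. The optimal weighted cost is at most $\sum_{S\subseteq X}|S|=k'2^{k'-1}$, which gives the same bound. Either way the algorithm is deterministic and runs in polynomial time, so certainly in FPT time.

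I expect no real obstacle. The one subtle point is the reduction step itself: one must index Set Cover sets by the bottom conjunctions, not by variables, and use \Cref{lem:prune} to bound their sizes by $k$. Without pruning, a chosen conjunction could be huge, and the factor $k$ converting "number of sets" to "weight" would fail.
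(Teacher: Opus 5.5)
Your proof is correct and is essentially the paper's argument: prune so every bottom set has size at most $k$, note that the witnesses chosen from an optimal solution $X$ are at most $2^k$ distinct subsets of $X$, and run a greedy covering procedure to get $O(2^k\log n)$ sets of size at most $k$ each. The difference is cosmetic. The paper runs the greedy directly on the circuit, choosing the $N_{a,b}$ that maximizes newly covered clauses with respect to the union of chosen variables, and reproves the $\ln n$ bound inline by averaging over the $2^k$ subsets of $X$. You instead pass through the explicit Set Cover contraction, which is the same one the paper uses in \Cref{thm:mmsa3_to_mmsa2_reduction}, and cite the greedy guarantee as a black box.
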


\begin{proof}
    Write the input instance as
    $$
        C(x)=\bigwedge_{i=1}^{n}\bigvee_{j=1}^{n}\bigwedge_{t\in N_{i,j}}x_t,
    $$
    where $N_{i,j}\subseteq[n]$ for every $i,j\in[n]$. By \Cref{lem:prune}, we can without loss of generality assume that $|N_{i,j}| \le k$.

    The algorithm proceeds by repeatedly assigning some variables to 1, until the whole instance $C$ is satisfied. Let
    $$S=\{i \in [n] \mid x_i=1\}$$
    be the set of variables that are assigned to 1, the $i$-th clause
    $$\left(\bigvee_{j=1}^n \bigwedge_{t \in N_{i,j}} x_t\right)$$
    is covered if and only if there exists some term $j\in[n]$ such that $N_{i,j} \subseteq S$. Let
    $$
        \operatorname{Cov}(S)=\{i\in[n]\mid \exists j \in [n]\text{ with }N_{i,j}\subseteq S\}.
    $$
    The algorithm starts with $S_0=\emptyset$. In the $r$-th iteration, it takes the set $N_{a,b}$ maximizing the number of newly covered indices
    $$
        \left|\operatorname{Cov}(S_{r-1}\cup N_{a,b})\setminus \operatorname{Cov}(S_{r-1})\right|,
    $$
    and sets $S_{r}=S_{r-1}\cup N_{a,b}$. Given the fact that $|N_{i,j}| \le k$ for every $i,j \in [n]$, it suffices to prove that the algorithm stops after $O(2^k \log n)$ iterations, and thus outputs a satisfying assignment of $C$ with weight at most $O(k \cdot 2^k \log n)$.

    Let $x \in \{0,1\}^n$ be a satisfying assignment of $C$ and let $X=\text{supp}(x)$. Fix the $r$-th iteration, let 
    $$
        U_r=[n]\setminus \operatorname{Cov}(S_r)
    $$
    be the set of currently uncovered indices. For every $i\in U_r$, since $x$ satisfies $C$, there exists some witness $j=j(i)\in [n]$ with $N_{i,j(i)} \subseteq X$. Every chosen witness is a subset of $X$, and $X$ has at most $2^k$ distinct subsets. Therefore, by an averaging argument, some set $W\subseteq X$ appears as $N_{i,j(i)}$ for at least $|U_r|/2^k$ indices $i\in U_r$. The best enumerated set covers at least $|U_r|/2^k$ currently uncovered indices. Consequently,
    $$
        |U_{r+1}|\le \left(1-\frac1{2^k}\right)|U_r|.
    $$
    Iterating this inequality, for $T= \lceil 2^k \ln n\rceil$,
    $$
        |U_T|
        \le n\left(1-\frac1{2^k}\right)^T
        \le n\exp\left(-\frac{T}{2^k}\right)<1.
    $$
    Since $|U_T|$ is an integer, $U_T=\emptyset$. Thus, the algorithm returns a satisfying assignment for $C$ with weight at most $kT = O(k 2^k \log n)$.

    The algorithm is deterministic and runs in FPT time: it performs $O(2^k\log n)$ iterations, and in each iteration it enumerates at most $n^2$ candidate sets and computes their gains in polynomial time. This proves the theorem.
\end{proof}

\newcommand{\mmsa}{\mathrm{MMSA}}

\section[A Reduction from (k,h)-gap MMSA3 to (2 to k,h/k)-gap MMSA2]{A Reduction from $(k,h)$-gap MMSA$_3$ to $(2^k, h/k)$-gap MMSA$_2$}

\begin{theorem}
\label{thm:mmsa3_to_mmsa2_reduction}
    There is an FPT-time reduction, which takes as input a \mmsathree{} instance
    $$
        F(z)=\bigwedge_{i\in[n]}\bigvee_{j\in[n]}\bigwedge_{t\in N_{i,j}}z_t,
    $$
    on $z \in \{0,1\}^n$, outputs a $2^k$-MMSA$_2$ instance $G$ on $x\in\{0,1\}^{n^2}$, such that
    \begin{itemize}
        \item (Completeness) If $\operatorname{OPT}(F) \le k$, then $\operatorname{OPT}(G) \le 2^k$. 
        \item (Soundness) Fix any $h>k$, if $\operatorname{OPT}(F) >h$, then $\operatorname{OPT}(G)>h/k$.
    \end{itemize}
\end{theorem}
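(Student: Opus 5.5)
We reduce to Set Cover with one new variable per pair of \emph{small subsets}. By \Cref{lem:prune}, we may first assume $|N_{i,j}|\le k$ for all $i,j$. This only shrinks $\operatorname{SAT}(F)$ and preserves completeness.

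\textbf{Construction.} The natural variables for $G$ are the distinct sets $N_{i,j}$ themselves. There are at most $n^2$ of them, which matches the promised $x\in\{0,1\}^{n^2}$; we index them by pairs $(i,j)$, identifying duplicates or simply allowing repeated sets. For a set $W=N_{a,b}$, put
\[
\operatorname{Cov}(W)=\{i\in[n]\mid \exists j,\ N_{i,j}\subseteq W\},
\]
exactly as in the proof of \Cref{thm:logn_alg_mmsathree}. The instance is
\[
G(x)=\bigwedge_{i\in[n]}\ \bigvee_{(a,b)\,:\, i\in \operatorname{Cov}(N_{a,b})} x_{a,b}.
\]
This is an MMSA$_2$ instance, after padding widths, and it is computable in polynomial time.

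\textbf{Completeness.} Let $X$ with $|X|\le k$ satisfy $F$. For each $i$, choose a witness $j(i)$ with $N_{i,j(i)}\subseteq X$. The witness sets are subsets of $X$, so at most $2^k$ distinct sets $W$ occur among them. Set $x_{a,b}=1$ for one representative pair $(a,b)$ realizing each such $W$. Every $i$ then lies in $\operatorname{Cov}(N_{i,j(i)})$, so every clause of $G$ is satisfied, and $\operatorname{OPT}(G)\le 2^k$.

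\textbf{Soundness.} We argue by contrapositive. Let $x$ satisfy $G$ with weight $w\le h/k$, and set $Z=\bigcup_{x_{a,b}=1}N_{a,b}$. Since each $|N_{a,b}|\le k$, we get $|Z|\le kw\le h$. For each $i$, some chosen $(a,b)$ has $i\in\operatorname{Cov}(N_{a,b})$, so there is $j$ with $N_{i,j}\subseteq N_{a,b}\subseteq Z$. Hence $\mathbf 1_Z$ satisfies the pruned $F$, and therefore the original $F$, which gives $\operatorname{OPT}(F)\le h$. So $\operatorname{OPT}(F)>h$ implies $\operatorname{OPT}(G)>h/k$.

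The only real subtlety is the soundness direction: a variable of $G$ must correspond to a set of size at most $k$, since that is where the $h/k$ loss comes from. This is exactly why the pruning of \Cref{lem:prune} has to happen first; the rest is bookkeeping.
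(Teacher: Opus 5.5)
Your proof is correct and follows essentially the same route as the paper: prune to $|N_{i,j}|\le k$, turn each bottom conjunction into one variable, get completeness from the at most $2^k$ subsets of the support, and get soundness by decoding to the union of at most $h/k$ selected sets, each of size at most $k$. The one difference is cosmetic: your clause $i$ contains every $x_{a,b}$ with $i\in\operatorname{Cov}(N_{a,b})$, a containment-closed version, whereas the paper's clause $i$ contains only the variables $x_{s_{i,j}}$ for its own bottom sets $N_{i,j}$; the same arguments work for both.
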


The reduction contracts each distinct bottom conjunction of $F$ into one variable of the \mmsatwo{} instance $G$.  Namely, for every distinct set $N_{i,j}$ we create a variable intended to mean that all original variables in $N_{i,j}$ are set to $1$.  Since there are at most $n^2$ bottom conjunctions, this gives at most $n^2$ variables.

For completeness, suppose $F$ has a satisfying assignment $z$ whose support $Z$ has size at most $k$. Every bottom conjunction that is true under $z$ corresponds to a set $N_{i,j}\subseteq Z$ of size $\le k$, and there are at most $2^k$ many such sets. Setting the variables representing these true conjunctions to one gives a satisfying assignment of $G$ of weight at most $2^k$.

For soundness, suppose $G$ has a satisfying assignment $x$ of weight at most $h/k$. Decode $x$ back to an assignment $z$ for $F$ by taking the union of all sets represented by the ones in $x$. Since every represented set has size at most $k$, the decoded assignment has weight at most $h$.

\begin{proof}[Proof of \Cref{thm:mmsa3_to_mmsa2_reduction}]
    Let
    $$
        \mathcal S:=\{N_{i,j} \mid i,j\in[n]\}.
    $$
    By \Cref{lem:prune}, we can without loss of generality assume that every set in $\mathcal S$ has size at most $k$.  Since $|\mathcal S|\le n^2$, fix an injective map $f:\mathcal S\to[n^2]$.  For every $i,j\in[n]$, write
    $$
        s_{i,j}:=f(N_{i,j}).
    $$
    For every $s\in\operatorname{Im}(f)$, let $T_s$ denote the unique set in $\mathcal S$ such that $f(T_s)=s$.

    Define the \mmsatwo{} instance
    $$
        G(x):=\bigwedge_{i\in[n]}\bigvee_{j\in[n]}x_{s_{i,j}}.
    $$
    The instance uses at most $n^2$ variables and has size $n^{O(1)}$.  It can clearly be constructed in polynomial time, and hence in FPT time.

    \paragraph*{Completeness.}
    Suppose there exists $z\in\{0,1\}^n$ such that $F(z)=1$ and $\|z\|_0\le k$.  Let
    $$
        Z:=\operatorname{supp}(z)=\{t\in[n]:z_t=1\}.
    $$
    Define $x\in\{0,1\}^{n^2}$ by
    $$
        x_s:=
        \begin{cases}
            1, & \text{if }s\in\operatorname{Im}(f)\text{ and }T_s\subseteq Z,\\
            0, & \text{otherwise.}
        \end{cases}
    $$
    We first show that $G(x)=1$.  Fix any $i\in[n]$.  Since $F(z)=1$, there exists some $j\in[n]$ such that
    $$
        \bigwedge_{t\in N_{i,j}}z_t=1,
    $$
    equivalently, $N_{i,j}\subseteq Z$.  As $T_{s_{i,j}}=N_{i,j}$, the definition of $x$ gives $x_{s_{i,j}}=1$.  Therefore every outer clause of $G$ is satisfied, and hence $G(x)=1$.

    It remains to bound the weight of $x$.  If $x_s=1$, then $s\in\operatorname{Im}(f)$ and $T_s\subseteq Z$.  Because $f$ is injective, distinct coordinates correspond to distinct subsets of $Z$.  Hence
    $$
        \|x\|_0\le 2^{|Z|}\le 2^k.
    $$
    Thus $\operatorname{OPT}(G)\le 2^k$.

    \paragraph*{Soundness.}
    Fix $h>k$, and suppose for contradiction that there exists $x\in\{0,1\}^{n^2}$ such that $G(x)=1$ and $\|x\|_0\le h/k$.  Let
    $$
        \mathcal A:=\{T_s\mid s\in\operatorname{Im}(f)\text{ and }x_s=1\}
    $$
    be the family of original bottom sets represented by the selected variables of $G$, and let
    $$
        U:=\bigcup_{T\in\mathcal A}T.
    $$
    Define $z\in\{0,1\}^n$ to be the indicator vector of $U$.

    Since every $T\in\mathcal A$ has size at most $k$, and since $|\mathcal A|\le\|x\|_0$, we have
    $$
        \|z\|_0=|U|
        \le \sum_{T\in\mathcal A}|T|
        \le k\cdot |\mathcal A|
        \le k\|x\|_0
        \le h.
    $$

    We now show that $F(z)=1$.  Fix any $i\in[n]$.  Since $G(x)=1$, there exists some $j\in[n]$ such that $x_{s_{i,j}}=1$.  By construction, $T_{s_{i,j}}=N_{i,j}$, so $N_{i,j}\in\mathcal A$ and therefore $N_{i,j}\subseteq U=\operatorname{supp}(z)$.  Hence
    $$
        \bigwedge_{t\in N_{i,j}}z_t=1.
    $$
    This holds for every $i\in[n]$, so $F(z)=1$, contradicting $\operatorname{OPT}(F)>h$.
\end{proof}
\begin{corollary}
\label{cor:gap_mmsa3_to_mmsa2}
    For any computable function $t(n,k)$, if there is an FPT-time $t(n,k)$-approximation for \mmsatwo{}, then there is an FPT-time $(t(n^2,2^k)\cdot 2^k)$-approximation for \mmsathree{}. Equivalently, if there is no FPT-time $t(n,k)$-approximation for \mmsathree{}, then there is no FPT-time $\frac{t(\sqrt n,\log k)}{k}$-approximation for \mmsatwo{}.
\end{corollary}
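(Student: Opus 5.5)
The plan is to compose the reduction of \Cref{thm:mmsa3_to_mmsa2_reduction} with the hypothesized algorithm for \mmsatwo{}, and then lift the returned solution back to $F$ using the soundness decoding. Let $F$ be a \mmsathree{} instance on $n$ variables with $\operatorname{OPT}(F)\le k$. First apply \Cref{lem:prune}, then run the reduction to get a \mmsatwo{} instance $G$ on at most $n^2$ variables with parameter $k'=2^k$. Completeness gives $\operatorname{OPT}(G)\le 2^k$. This step takes FPT time, and any FPT-time algorithm run on $G$ with parameter $2^k$ is again FPT in $k$, because $2^k$ is computable from $k$.

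Next, run the assumed $t(n,k)$-approximation on $(G,2^k)$, after padding $G$ to exactly $n^2$ variables if needed. It returns a satisfying assignment $x$ of $G$ with $\|x\|_0\le t(n^2,2^k)\cdot 2^k$. If $t$ is meant as a multiplicative ratio against $\operatorname{OPT}(G)\le 2^k$, this bound follows directly. Decode $x$ exactly as in the soundness proof: take $z$ to be the indicator of $U=\bigcup_{x_s=1}T_s$. The argument there shows that $F(z)=1$ and $\|z\|_0\le k\|x\|_0$, so $\|z\|_0\le k\cdot 2^k\cdot t(n^2,2^k)$.

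The main obstacle is the bookkeeping of the ratio. Dividing by $\operatorname{OPT}(F)$ is delicate, since $\operatorname{OPT}(F)$ may be smaller than $k$. I would therefore run the procedure for each guess $k_0=1,\dots,k$ and keep the smallest satisfying output. For the correct guess $k_0=\operatorname{OPT}(F)$, the ratio is $2^{k_0}t(n^2,2^{k_0})$, which is at most $2^k t(n^2,2^k)$ if $t$ is monotone. Here the factor $k$ from decoding cancels against $\operatorname{OPT}(F)=k_0$ in the denominator. This gives the stated factor $t(n^2,2^k)\cdot 2^k$.

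The equivalent contrapositive formulation comes from substituting $n\mapsto\sqrt n$ and $k\mapsto\log k$. Suppose \mmsatwo{} had a $t'(n,k)$-approximation with $t'(n,k)=t(\sqrt n,\log k)/k$. Then \mmsathree{} would get the factor $t'(n^2,2^k)\cdot 2^k=t(n,k)$, which is a contradiction.
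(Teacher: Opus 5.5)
Your proposal is correct and matches the paper's implicit argument. The corollary is \Cref{thm:mmsa3_to_mmsa2_reduction} read with $h = k\cdot 2^k\cdot t(n^2,2^k)$, your decoding step is its soundness argument made constructive, and your substitution $n\mapsto\sqrt n$, $k\mapsto\log k$ correctly yields the contrapositive form.

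One simplification: the paper measures the ratio against the parameter $k$ (the $(k,h)$-gap formulation). So the bound $\|z\|_0\le k\cdot 2^k\cdot t(n^2,2^k)$ from your second paragraph already finishes the proof, and the guessing over $k_0$ is unnecessary, along with the monotonicity assumption on $t$ it requires, which the statement does not grant.
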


In particular, a $(\log n)^{\Omega(1)}$-factor FPT inapproximability for \mmsathree{} under $\wone\neq\fpt$, or an $\omega(2^k)$-factor FPT inapproximability for \mmsathree{} under $\wtwo\neq\fpt$, could imply new FPT inapproximability results for \mmsatwo{} under the respective assumptions, improving \cite{KLM19,Lin19,LRSW23a}.

\section[A Reduction from k-MMSA-t to Gap k-MMSA-(t+2)]{A Reduction from $k$-MMSA$_t$ to Gap $k$-MMSA$_{t+2}$} 

In this section, we first summarize known inapproximability results for $k$-MMSA$_4$, then introduce \cite{Marx13}'s reduction from $k$-MMSA$_t$ to $k$-MMSA$_{t+2}$, and finally plug a better set of parameters into it to get improved inapproximability of $k$-MMSA$_4$ under $\wtwo\neq\fpt$ and $\hypoeth$.

\subsection[Inapproximability of k-MMSA4 from Literature]{Inapproximability of \mmsafour{} from Literature}

The following inapproximability on $k$-MaxCover was from \cite{KLM19, KL21}. We remark that \cite{KLM19} also provides the inapproximability under other assumptions such as $\seth$ or the $k$-$\mathsf{SUM}$ Hypothesis. However, we choose to present only the most relevant assumptions here.

\begin{lemma}[Inapproximability of $k$-MaxCover \cite{KLM19, KL21}]\label{lem:kmaxcover_hardness}\leavevmode
\begin{itemize}
    \item Assuming $\wone\neq\fpt$, no FPT  algorithm can approximate $k$-MaxCover within an $n^{1/\poly(k)}$ factor. 
    \item Assuming $\hypoeth$, no $n^{o(k)}$-time algorithm can approximate $k$-MaxCover within an $n^{1/\poly(k)}$ factor. 
\end{itemize}
    
\end{lemma}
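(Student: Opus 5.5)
Since this lemma is quoted from \cite{KLM19, KL21}, I would reconstruct it from the standard gap-creating pipeline in those works rather than build anything new. Both items have the same shape. First, start from a problem that is hard under the stated assumption: $k$-Clique (equivalently, $k$-variable parameterized $2$-CSP) for $\wone\neq\fpt$, and $3$-SAT split into $k$ blocks of $n/k$ variables each for $\hypoeth$. Then turn it into a $k$-MaxCover instance $G=(A_1\dot\cup\cdots\dot\cup A_k,\,B_1\dot\cup\cdots\dot\cup B_m,\,E)$ whose completeness value is $1$ and whose soundness value is at most $n^{-1/\poly(k)}$.

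\textbf{Step 1 (left side).} Each left part $A_i$ is the set of candidate assignments to the $i$-th block (a vertex of the $i$-th color class, or a partial assignment to the $i$-th block of SAT variables). In particular, $|A_i|\le n$ in the $\wone$ case and $|A_i|=2^{O(n/k)}$ in the $\hypoeth$ case.

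\textbf{Step 2 (right side, gap creation).} The right parts are indexed by the ``verifier randomness''. The construction I would use is the threshold-graph composition of \cite{Lin19,KLM19}, or its algebraic variant in \cite{KL21}. Encode each block's assignment by a good error-correcting code, for instance Reed--Solomon or Reed--Muller over a field of size $\poly(k)\cdot\log n$. Each $B_j$ then corresponds to a random tuple of coordinates, and a vertex $b\in B_j$ is a locally consistent view of those coordinates that checks all $\binom{k}{2}$ pairwise constraints simultaneously.

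\textbf{Step 3 (completeness and soundness).} Completeness is immediate: a genuine clique or satisfying assignment yields consistent codewords, so every $B_j$ is covered. For soundness, I would argue the contrapositive. Suppose some choice $a_1,\ldots,a_k$ covers more than an $n^{-1/\poly(k)}$ fraction of the $B_j$. By the distance of the code, together with a Schwartz--Zippel or sampling argument, consistency on that many random checks forces the chosen $a_i$ to be genuinely pairwise consistent, which contradicts the NO case. The key parameter choice is to repeat the test $\ell\approx\log n/\poly(k)$ times in parallel, or equivalently to take $m$ large enough, so that the soundness error falls to $n^{-1/\poly(k)}$. At the same time, $|B_j|$ must stay bounded by $n^{O(1)}$ (respectively $2^{O(n/k)}$), so that the instance size remains FPT, respectively $n^{o(k)}$-compatible.

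\textbf{Main obstacle.} I expect the soundness analysis of Step 3 to be the hard part, specifically avoiding Gap-ETH and obtaining a gap that grows polynomially in $n$ rather than as an $f(k)$. This is exactly where \cite{KL21} needs the $\wone$ case via $k$-Clique with algebraic codes, and where \cite{KLM19} handles the $\hypoeth$ case via the splitting of $3$-SAT. In a self-contained writeup, I would first try to import the distributed-PCP soundness lemma of \cite{KLM19} as a black box. Finally, I would check that the resulting exponent is $1/\poly(k)$ under both runtime regimes: FPT for the first item, and $n^{o(k)}$ for the second, using the sparsification lemma to keep the ETH reduction size-linear.
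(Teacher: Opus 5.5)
The paper does not prove this lemma; it simply quotes it from \cite{KLM19,KL21}. Your fallback of importing their theorem wholesale is therefore exactly what the paper does. The self-contained reconstruction you sketch, however, would not go through as written.

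\textbf{Step 2 cannot certify the constraints.} Suppose $A_i$ consists of single vertices of $V_i$ (or single block assignments), and $b\in B_j$ records only a few code symbols $C(a_i)|_J$ of each $a_i$. Then no local rule on $b$ can certify adjacency.
If ``$b$ is locally consistent'' means that for every pair $\{i,i'\}$ some edge of $E(V_i,V_{i'})$ matches the two views, there are two cases. With short views, a non-clique typically covers almost every $B_j$: for example, if every pair of views is realized by some edge, every labeling covers everything. With views long enough to identify the vertices, the valid $b$'s are exactly the $k$-cliques, and $B_j$ cannot be built in FPT time. The same objection applies to clauses spanning your 3-SAT blocks.

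The missing idea is that the code should only test \emph{equality} of data the left vertices already commit to. For the first item this suffices:
take left parts $A_{\{i,i'\}}=E(V_i,V_{i'})$ for all $\binom{k}{2}$ pairs;
take a Reed--Solomon code $C$ of degree $<k$ over $\mathbb{F}_q$ with $q\approx n^{1/k}$;
create one right part $B_j=\mathbb{F}_q^{k}$ for each $j\in\mathbb{F}_q$;
join $(u,u')\in A_{\{i,i'\}}$ to $(\sigma_1,\ldots,\sigma_k)$ iff $C(u)_j=\sigma_i$ and $C(u')_j=\sigma_{i'}$.
A clique covers every part. Otherwise two chosen edges disagree on some $V_i$-endpoint, so at most a $k/q\approx k\,n^{-1/k}$ fraction of parts is covered.

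This construction, though, raises the parameter to $K=\binom{k}{2}$. Under $\hypoeth$ it therefore only rules out $n^{o(\sqrt{K})}$-time algorithms. For the second item a parameter-preserving gap-creation step is genuinely needed, and that is the part you would have to take from \cite{KLM19,KL21}. In any advice-based distributed-PCP version, the prover's advice must be fixed once for the whole instance. If the advice sits inside the right vertices, each $B_j$ picks its own and soundness collapses.

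\textbf{Your parameters are inconsistent.} Take alphabet $q=\poly(k)\log n$ and $\ell\approx\log n/\poly(k)$ independent repetitions. Then each right part has size at least $q^{\ell}\ge n^{\log\log n/\poly(k)}$. Taking all $\ell$-tuples of coordinates also gives superpolynomially many right parts. So the instance is not FPT-sized. You can fix this in either of two ways. Keep the base alphabet a function of $k$ alone and derandomize the repetition with a sampler. Or drop repetition and use a single code with agreement $n^{-1/\poly(k)}$, as in the construction above.
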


Passing \Cref{lem:kmaxcover_hardness} through the reduction chain of \Cref{lem:maxcover2minlabel,lem:minlabel_mmsafour,lem:mmsa_padding}, we have:
\begin{corollary}[Inapproximability of $k$-MMSA$_4$, implicit in \cite{KLM19, KL21}]\label{lem:klmkl_mmsa4}\leavevmode
\begin{itemize}
    \item Assuming $\wone\neq\fpt$, no FPT algorithm can approximate $k$-MMSA$_4$ within an $n^{o(1)}$ factor.
    \item Assuming $\hypoeth$, no $n^{o(k)}$-time algorithm can approximate $k$-MMSA$_4$ within an $n^{1/\poly(k)}$ factor.
\end{itemize}
\end{corollary}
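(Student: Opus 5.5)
The statement is a corollary obtained by composing reductions already in the paper, so the plan is to chain them and track how each gap parameter changes.

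For the first item, assume $\wone\neq\fpt$. \Cref{lem:kmaxcover_hardness} says no FPT algorithm can tell $\maxcover(G)=1$ apart from $\maxcover(G)\le\varepsilon$ with $\varepsilon=n^{-1/p(k)}$, for some polynomial $p$. I read it as this $(1,\varepsilon)$-gap form, which is the standard way the cited results are stated.

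\begin{enumerate}
\item Apply \Cref{lem:maxcover2minlabel}. The completeness case gives $\minlabel(G)=k$. The soundness case gives $\minlabel(G)\ge k\,\varepsilon^{-1/k}=k\cdot n^{1/(k\,p(k))}$.
\item Apply \Cref{lem:minlabel_mmsafour}. This yields a polynomial-size $k$-MMSA$_4$ circuit $C_G$ with $\operatorname{OPT}(C_G)=\minlabel(G)$ exactly. Its number of input variables is $N=|A|$, padded to the gate widths, which is polynomial in $n$. So $n^{1/(kp(k))}=f(N)^{1/g(k)}$, where $f(N)=N$ and $g(k)=O(k\,p(k))$ absorbs the constant polynomial relation between $N$ and $n$.
\item We now have that no FPT algorithm distinguishes $\operatorname{OPT}\le k$ from $\operatorname{OPT}>k\cdot f(N)^{1/g(k)}$. (If needed, take a slightly smaller exponent so the inequality is strict.) The function $f(N)=N$ is computable, nondecreasing, unbounded and padding-stable, since $N+a\le N^2$ for large $N$. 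So \Cref{lem:mmsa_padding} applies with $t=4$. It rules out FPT algorithms for every ratio $N^{\alpha(k)}$ with $\alpha=o(1)$, which is exactly the claimed $n^{o(1)}$ inapproximability.
\end{enumerate}

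For the second item, assume $\hypoeth$ and run the same chain without the padding step. The reductions in \Cref{lem:maxcover2minlabel,lem:minlabel_mmsafour} run in polynomial time and keep the parameter equal to $k$. An $n^{o(k)}$-time algorithm for $k$-MMSA$_4$ with ratio $N^{1/(kp(k))}=n^{1/\poly(k)}$ would therefore give an $n^{o(k)}$-time algorithm for the $k$-MaxCover gap problem, contradicting \Cref{lem:kmaxcover_hardness}. The only point to check is that the polynomial blow-up from $n$ to $N$ preserves both the $n^{o(k)}$ running-time form and the $n^{1/\poly(k)}$ ratio form. It does, because $N\le n^{c}$ for a constant $c$, and the factor $c$ can be folded into the $o(k)$ and into $\poly(k)$.

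The main obstacle is bookkeeping rather than a new idea. First, one must confirm that \Cref{lem:kmaxcover_hardness} is really available in the perfect-completeness gap form that \Cref{lem:maxcover2minlabel} needs. Second, one must check the hypotheses of \Cref{lem:mmsa_padding}: the $1/g(k)$ exponent has to be written in terms of the circuit size $N$, not the MaxCover size $n$, and the ratio has to be written in the form $k\cdot f(N)^{1/g(k)}$ that the lemma expects. If the cited hardness only has imperfect completeness, I would fall back on the versions in \cite{KLM19,KL21}, which do have perfect completeness.
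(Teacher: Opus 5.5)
Your proposal is correct and follows the paper's route exactly. You chain \Cref{lem:kmaxcover_hardness} through \Cref{lem:maxcover2minlabel} and \Cref{lem:minlabel_mmsafour}, then apply \Cref{lem:mmsa_padding} for the $\wone$ item and skip the padding step for the $\hypoeth$ item; your bookkeeping (perfect completeness, folding $N\le n^{c}$ into $g(k)=O(k\,p(k))$, making the soundness inequality strict) fills in details that the paper's one-line justification leaves implicit.
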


Marx \cite{Marx13} also showed a total-FPT-inapproximability for $k$-MMSA$_4$ under the weaker assumption $\wtwo\neq\fpt$.

\begin{lemma}[\cite{Marx13}]
\label{lem:marx13kmmsa4}
    Assuming $\wtwo\neq\fpt$, for any computable function $f$, no FPT algorithm can approximate $k$-MMSA$_4$ within an $f(k)$ factor.
\end{lemma}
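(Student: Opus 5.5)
Marx's gap-creating reduction from $k$-MMSA$_2$ (i.e.\ $k$-SetCover, which is $\wtwo$-complete) to $k$-MMSA$_4$ is the route I would take. Start from a $k$-SetCover instance, viewed as $C(y)=\bigwedge_{i}\bigvee_{j} y_{w_{i,j}}$ on $n$ variables, and fix a target gap $g$ that depends only on $k$ (any computable $f(k)$). The plan is to amplify in the style of a "threshold/direct product" construction. Introduce new variables indexed by pairs $(S,\ell)$, where $S$ is a candidate $k$-subset of the original variables (or a tuple) and $\ell$ ranges over a label set of size $L=L(k,g)$. The new formula asserts that for every "test" (an AND over a large family of $k$-subsets or hash functions) there is a chosen label (an OR) such that the selected variables, read through a perfect hash family, certify a cover. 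This keeps alternation depth $t+2=4$: the outer $\bigwedge\bigvee$ ranges over tests and choices, and the inner $\bigwedge\bigvee$ is the original SetCover.

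Completeness is the easy direction. If a cover $\{v_1,\dots,v_k\}$ exists, set the $k$ corresponding new variables (one per block, with $k$ blocks via a perfect hash family from \Cref{lem:perfect_hash_family}). Every test is then satisfied, and the weight stays $k$.

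Soundness is the main obstacle. Suppose a satisfying assignment of weight $\le g\cdot k$ exists. I want to extract a SetCover solution of size $k$, or at least show that its existence can be decided in FPT time, which then yields an FPT algorithm for $k$-SetCover and contradicts $\wtwo\neq\fpt$. The approach I would try first is a pigeonhole/Ramsey argument on the bounded-size support: since at most $gk$ variables are set, the tests are designed so that each of them must be witnessed by a $k$-subset of the support. There are only $\binom{gk}{k}$ such subsets, so by the choice of the test family (all ways of restricting the clause set along a hash family of size exceeding $\binom{gk}{k}$), a single $k$-subset must witness all clauses simultaneously. That subset is then a genuine cover.

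The delicate part is choosing the test family so that "every test is covered by some small subset of the support" forces one subset to be universal. I would index tests by tuples of $\binom{gk}{k}+1$ clauses and require, within a test, that a single chosen $k$-block covers all of them. A union argument over the finitely many candidate subsets then gives the contradiction. The remaining work is routine: check that the instance size is $n^{O(h(k))}$, so the reduction is FPT, and that the depth is exactly $4$ after padding.
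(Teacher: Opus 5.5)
There is a genuine gap: the reduction you describe is not an FPT reduction, and your soundness argument depends on exactly the part that makes it too large. Your tests are indexed by tuples of $\binom{gk}{k}+1$ clauses, so there are $n^{\Theta(\binom{gk}{k})}$ of them. The choices inside a test range over candidate $k$-subsets of the $n$ original variables, of which there are $\binom{n}{k}=n^{\Theta(k)}$. You note that the size is $n^{O(h(k))}$ and call this FPT, but it is not. An FPT reduction must run in time $\phi(k)\cdot n^{O(1)}$, and a reduction running in time $n^{O(h(k))}$ yields nothing under $\wtwo\neq\fpt$, since $k$-SetCover is already solvable by brute force in time $n^{O(k)}$.

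The pigeonhole step itself is fine: if every $k$-subset of the support missed some clause, the tuple of missed clauses would be a failing test. But it needs each test to be witnessed by a set meeting the support in at most $k$ elements, and your design guarantees this only by enumerating $k$-subsets of $[n]$ directly. The variable set is also inconsistent. You introduce new variables indexed by pairs $(S,\ell)$, yet completeness sets ``$k$ corresponding new variables'' and soundness reasons about the support of the original variables. So neither the weight accounting nor the claimed depth-$4$ shape is actually pinned down.

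The missing idea is the one in \Cref{thm:mmsa_reduction_phf}, the paper's adaptation of Marx's reduction. Let a perfect hash family do the enumeration over a range whose size depends only on $k$, and put the hash index in the outer conjunction. Take $h=k\cdot f(k)$ and an $(n,m,\Sigma,h)$-perfect hash family $\{f_1,\dots,f_m\}$ with $|\Sigma|=h^2$ and $m=h^{O(1)}\log n$ (\Cref{lem:perfect_hash_family}). Define, on the same variables,
\[
C'(x)=\bigwedge_{i\in[m]}\bigvee_{T\in\binom{\Sigma}{k}} C\bigl(x\land\mathbf 1_{f_i^{-1}(T)}\bigr).
\]
This has depth $4$ and size $h^{O(k)}\cdot n^{O(1)}$, which is FPT.

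Completeness is immediate: a cover $X$ with $|X|\le k$ has $f_i(X)\subseteq T_i$ for some $T_i$, so the same assignment satisfies $C'$. Soundness uses the hardness of the \emph{exact} problem rather than any pigeonhole over subsets. If $C'(x)=1$, then each restriction $x\land\mathbf 1_{f_i^{-1}(T_i)}$ satisfies $C$, and so has weight at least $k+1$ when $\operatorname{OPT}(C)>k$. But if $|\operatorname{supp}(x)|\le h$, some $f_{i^\star}$ is injective on $\operatorname{supp}(x)$. This forces $|\operatorname{supp}(x)\cap f_{i^\star}^{-1}(T)|\le k$ for every $T$, a contradiction. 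Hence an FPT $f(k)$-approximation for \mmsafour{} would decide $k$-SetCover in FPT time.
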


\subsection{Revisiting Marx's Reduction}

\begin{theorem}[Adapted from \cite{Marx13}]
\label{thm:mmsa_reduction_phf}
Let $t\in\mathbb N^+$ and let $h\ge k$. 
Given a $k$-MMSA$_t$ instance $C$ on $n$ variables and an
$(n,m,\Sigma,h)$-perfect hash family
$\mathcal F=\{f_1,\ldots,f_m\}$ with $|\Sigma|\ge k$, there is a reduction
that outputs a $k$-MMSA$_{t+2}$ instance $C'$ on the same $n$ variables such that:
\begin{itemize}
    \item the size and running time are at most
    $\left(m\cdot \binom{|\Sigma|}{k}\cdot |C|\right)^{O(1)}$;
    \item if $C$ has a satisfying assignment of weight at most $k$, then
    the same assignment satisfies $C'$;
    \item if every satisfying assignment of $C$ has weight at least $k+1$,
    then every satisfying assignment of $C'$ has weight more than $h$.
\end{itemize}
\end{theorem}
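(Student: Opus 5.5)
Our plan is to build $C'$ as a conjunction, over all hash functions $f\in\mathcal F$ and all $k$-subsets $\Gamma\subseteq\Sigma$ of colors, of a copy of $C$ whose input literals are replaced by color-class disjunctions. For each $f_\ell$ and each color $\sigma\in\Sigma$, the color class $f_\ell^{-1}(\sigma)\subseteq[n]$ is known. For a fixed pair $(\ell,\Gamma)$, the sub-circuit asks that the weight-$\le k$ structure of $C$ be realizable \emph{inside} the colors of $\Gamma$. Since weight $>h$ must be forced when $\operatorname{OPT}(C)\ge k+1$, we want that any $X$ with $|X|\le h$ that satisfies $C'$ yields a satisfying assignment of $C$ of weight $\le k$. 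The key is to pick $\ell$ with $f_\ell$ injective on $X$; then each color class meets $X$ in at most one element.

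\textbf{Construction.} For the $t$ odd/even shape, let $D$ be the top gate of $C$ (an $\wedge$). Concretely, define
\[
C'(x)=\bigwedge_{\ell\in[m]}\ \bigvee_{\Gamma\in\binom{\Sigma}{k}}\ C_{\ell,\Gamma}(x),
\]
where $C_{\ell,\Gamma}$ is $C$ with every input literal $x_w$ replaced by $x_w$ if $f_\ell(w)\in\Gamma$ and by the constant $0$ otherwise. Constants are removed by pruning: a $0$ inside an $\wedge$ kills it, and inside an $\vee$ it is dropped. This keeps the circuit monotone and of depth $t$ under the $\vee\Gamma$ gate. Since the outer $\wedge_\ell$ merges with nothing, we obtain $\wedge\vee(\text{depth-}t\text{ starting with }\wedge)$. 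That is $t+2$ levels, with alternation starting at $\wedge$, as required. Padding yields the stated size $(m\binom{|\Sigma|}{k}|C|)^{O(1)}$. On its own this is not enough, because restricting to $k$ colors does not bound the weight. We therefore additionally require, inside each $C_{\ell,\Gamma}$, that for each color $\gamma\in\Gamma$ the chosen variable be ``unique.'' The cleanest way follows Marx. For each $\gamma\in\Gamma$, guess which single variable $v_\gamma\in f_\ell^{-1}(\gamma)$ represents that color. The resulting expression is $\bigvee_\Gamma\bigvee_{(v_\gamma)}C|_{\{v_\gamma\}}$, but guessing the $v_\gamma$ costs $n^k$, which is too expensive. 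Instead we use the hash property directly, as follows.

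\textbf{Soundness via injectivity.} Suppose $X=\operatorname{supp}(x)$ satisfies $C'$ with $|X|\le h$. Pick $\ell$ with $f_\ell$ injective on $X$. Some $\Gamma$ then gives $C_{\ell,\Gamma}(x)=1$, and hence $C(x\cdot\mathbf 1[f_\ell\in\Gamma])=1$. The restricted support $X\cap f_\ell^{-1}(\Gamma)$ has at most one element per color by injectivity, so its size is at most $k$. This contradicts $\operatorname{OPT}(C)\ge k+1$. This argument already works without any guessing of the $v_\gamma$, so the plain construction above suffices. Completeness: if $|X|\le k$ satisfies $C$, then for every $\ell$ we take $\Gamma\supseteq f_\ell(X)$, which is possible since $|f_\ell(X)|\le k\le|\Sigma|$. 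Then $x$ restricted to $\Gamma$ equals $x$, so $C_{\ell,\Gamma}(x)=C(x)=1$.

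\textbf{Main obstacle.} The delicate step is the syntactic bookkeeping. We must show that substituting constants and stacking $\wedge_\ell\vee_\Gamma$ on top of $C$ gives a genuine MMSA$_{t+2}$ formula of the prescribed alternating shape. In particular, the top $\wedge$ of each $C_{\ell,\Gamma}$ sits below $\vee_\Gamma$, which increases the depth by exactly two. When pruning $0$-constants empties an $\vee$ gate, that gate becomes the constant $0$, and we propagate this upward. We handle this uniformly by adding one fresh dummy variable that is never used by satisfying assignments in a way that would matter. Alternatively, we drop such $\Gamma$ branches entirely, since those branches are unsatisfiable.
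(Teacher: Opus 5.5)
Your final construction $\bigwedge_{\ell}\bigvee_{\Gamma\in\binom{\Sigma}{k}}C(x\wedge\mathbf 1_{f_\ell^{-1}(\Gamma)})$ is exactly the paper's proof, with completeness via $\Gamma\supseteq f_\ell(X)$ and soundness via an $f_\ell$ injective on $X$ forcing $|X\cap f_\ell^{-1}(\Gamma)|\le k$; as you conclude yourself, the detour about guessing representatives $v_\gamma$ is unnecessary. One caution on the constant-$0$ bookkeeping, which the paper does not discuss: dropping dead $\Gamma$-branches is the right fix, whereas replacing $0$ by a single fresh dummy variable breaks soundness, since setting that variable to $1$ switches every restricted-out leaf back on at cost $1$.
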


\begin{proof}
For every $i\in[m]$ and every $T\in\binom{\Sigma}{k}$, write
\[
    N_{i,T}:=f_i^{-1}(T).
\]
Define
\[
    C'(x):=
    \bigwedge_{i\in[m]}
    \bigvee_{T\in\binom{\Sigma}{k}}
    C\bigl(x\land \mathbf 1_{N_{i,T}}\bigr).
\]
This is a monotone circuit with two additional outer layers, hence an
MMSA$_{s+2}$ instance, and its size is bounded by
$\left(m\cdot \binom{|\Sigma|}{k}\cdot |C|\right)^{O(1)}$.

For completeness, let $x$ be a satisfying assignment of $C$ with support
$X\subseteq[n]$ and $|X|\le k$. For every $i\in[m]$, choose
$T_i\in\binom{\Sigma}{k}$ such that
\[
    f_i(X)\subseteq T_i.
\]
This is possible because $|f_i(X)|\le |X|\le k$ and $|\Sigma|\ge k$.
Then $X\subseteq N_{i,T_i}$, so
\[
    x\land \mathbf 1_{N_{i,T_i}}=x.
\]
Therefore every outer conjunction of $C'$ is satisfied, and hence $C'(x)=1$.

For soundness, let $x$ be any satisfying assignment of $C'$ and let
$X=\operatorname{supp}(x)$. Since $C'(x)=1$, for every $i\in[m]$ there exists
some $T_i\in\binom{\Sigma}{k}$ such that
\[
    C\bigl(x\land \mathbf 1_{N_{i,T_i}}\bigr)=1.
\]
By the soundness assumption on $C$, the assignment
$x\land \mathbf 1_{N_{i,T_i}}$ must have weight at least $k+1$. Equivalently,
\[
    |X\cap N_{i,T_i}|\ge k+1
    \qquad\text{for every } i\in[m].
\]

Suppose toward contradiction that $|X|\le h$. Since $\mathcal F$ is an
$(n,m,\Sigma,h)$-perfect hash family, there exists some $i^\star\in[m]$ such
that $f_{i^\star}$ is injective on $X$. For this $i^\star$, every
$T\in\binom{\Sigma}{k}$ can contain images of at most $k$ elements of $X$.
Thus
\[
    |X\cap f_{i^\star}^{-1}(T)|\le k
    \qquad\forall T\in\binom{\Sigma}{k},
\]
contradicting
\[
    |X\cap N_{i^\star,T_{i^\star}}|\ge k+1.
\]
Hence $|X|>h$.
\end{proof}

\begin{remark}
    We remark that the goal of Marx \cite{Marx13} was to rule out $f(k)$-factor FPT approximation algorithms for any computable function $f$. Thus, it was sufficient for him to choose $h$ to be larger than the approximation factor $f(k)$. In his hardness proof, $h$ was used as the parameter of an intermediate \emph{FPT cost-approximation algorithm}, which in turn forced $h$ to depend only on $k$. In contrast, our argument proceeds more directly, allowing us to choose $h$ to be as large as $n^{O(1/k)}$.
\end{remark}

We apply \Cref{thm:mmsa_reduction_phf} using an
$(n,m,\Sigma,h)$-perfect hash family from \Cref{lem:perfect_hash_family}
with $h=n^{\Theta(1/k)}, |\Sigma|=h^2$ and $m=h^{O(1)}\log n$. Then
\[
    \binom{|\Sigma|}{k}\le h^{2k},
\]
so the reduction size and running time are bounded by
\[
    \left(m\cdot \binom{|\Sigma|}{k}\cdot |C|\right)^{O(1)}
    =
    \left(h^{O(1)}\log n\cdot h^{2k}\cdot |C|\right)^{O(1)} \le n^{O(1)}.
\]
It was known \cite{CHKX06} that under $\hypoeth$, $k$-MMSA$_2$ (a.k.a. $k$-SetCover) cannot be solved in $n^{o(k)}$ time. We thus have the following immediate corollary:

\begin{corollary}
\label{cor:mmsafour_eth}
    Assuming $\hypoeth$, no $n^{o(k)}$-time algorithm can approximate \mmsafour{} within some $n^{O(1/k)}$ factor.
\end{corollary}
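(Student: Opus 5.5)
The plan is a contradiction argument. Suppose some algorithm $\mathcal A$ approximates \mmsafour{} within a factor $n^{c/k}$ (with $c>0$ a suitably small constant fixed below) and runs in time $n^{o(k)}$. From this I will build an $n^{o(k)}$-time exact algorithm for \mmsatwo{}. That contradicts \cite{CHKX06}, which shows that under $\hypoeth$, $k$-SetCover cannot be solved in $n^{o(k)}$ time.

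Given a \mmsatwo{} instance $C$ on $n$ variables with parameter $k$, I first set $h=\lfloor n^{1/k}\rfloor$ and assume $h\ge k$. The complementary case $k>n^{1/k}$ means $n<k^k$, so brute force over all weight-$\le k$ assignments already runs in FPT time. Next I build an $(n,m,\Sigma,h)$-perfect hash family via \Cref{lem:perfect_hash_family}, with $|\Sigma|=h^2\ge k$ and $m=h^{O(1)}\log n$, in time $n^{O(1)}$. I then apply \Cref{thm:mmsa_reduction_phf} with $t=2$. By the computation preceding the corollary, this produces a \mmsafour{} instance $C'$ of size $N\le n^{O(1)}$, say $N\le n^{d}$ for an absolute constant $d$, in time $n^{O(1)}$. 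The key point is that $\binom{h^2}{k}\le h^{2k}\le n^2$, and this is exactly why $h$ must be set to $n^{\Theta(1/k)}$. The reduction gives completeness and soundness: if $\operatorname{OPT}(C)\le k$ then $\operatorname{OPT}(C')\le k$, and if $\operatorname{OPT}(C)\ge k+1$ then $\operatorname{OPT}(C')>h$.

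I then run $\mathcal A$ on $C'$ with the same parameter $k$. Its running time is $N^{o(k)}=n^{O(1)\cdot o(k)}=n^{o(k)}$. Completeness and soundness separate the case $\operatorname{OPT}(C')\le k$ from the case $\operatorname{OPT}(C')>h$. So as long as the approximation ratio satisfies $\rho(N,k)\cdot k\le h$, the output weight of $\mathcal A$ decides whether $\operatorname{OPT}(C)\le k$. This is a gap problem, so I phrase the conclusion as distinguishing the two cases; any relevant approximation algorithm yields such a distinguisher. Deciding $\operatorname{OPT}(C)\le k$ for every $k$ solves $k$-SetCover in $n^{o(k)}$ time, the desired contradiction.

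The main obstacle is getting the parameters in the inequality $k\cdot N^{c/k}\le h$ to line up. With $N\le n^{d}$ we have $N^{c/k}\le n^{cd/k}$, so choosing $c=1/(2d)$ gives $k\,n^{1/(2k)}\le n^{1/k}$ whenever $k\le n^{1/(2k)}$; the remaining small-$n$ cases are again handled by brute force. I expect a secondary subtlety in the quantifier over $k$. The corollary states a factor $n^{O(1/k)}$ with an $o(k)$ exponent in the running time, so I must make sure the hardness of \cite{CHKX06} holds for the same sequence of $k$ values, i.e. the reduction preserves $k$ exactly. \Cref{thm:mmsa_reduction_phf} does preserve $k$, so this should be fine.
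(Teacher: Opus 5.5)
Your proposal is correct and follows the paper's argument. You instantiate the adapted Marx reduction (\Cref{thm:mmsa_reduction_phf}) with $t=2$ and an \cite{AYZ95} perfect hash family with $h=n^{\Theta(1/k)}$, so that $\binom{h^2}{k}\le h^{2k}$ keeps the \mmsafour{} instance polynomial, and then contradict the $\hypoeth$ lower bound of \cite{CHKX06} for $k$-SetCover. Your extra details, which the paper leaves implicit, all check out: the constant $c=1/(2d)$ absorbing the polynomial blow-up, the integrality of the output weight against $h=\lfloor n^{1/k}\rfloor$, and brute force when $n<k^{2k}$.
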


Using the same set of parameters and invoking the padding lemma (\Cref{lem:mmsa_padding}), we also have the following $\wtwo$-hardness:

\begin{corollary}
\label{cor:mmsafour_wtwo}
    Assuming $\wtwo\neq\fpt$, no FPT algorithm can approximate \mmsafour{} within any $n^{o(1)}$ factor.
\end{corollary}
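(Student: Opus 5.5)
The plan is to run Marx's reduction of \Cref{thm:mmsa_reduction_phf} with $t=2$, starting from $k$-MMSA$_2$ (i.e.\ $k$-SetCover, which is $\wtwo$-hard), and with an $h$ that grows with $n$. This produces a gap $k$-MMSA$_4$ instance. We then upgrade the resulting $n^{1/g(k)}$-type gap to an $n^{o(1)}$ gap using \Cref{lem:mmsa_padding}.

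Concretely, given a $k$-MMSA$_2$ instance $C$ on $n$ variables, we take $h=\lceil n^{1/k}\rceil$ (any $n^{\Theta(1/k)}$ works). We build the $(n,m,\Sigma,h)$-perfect hash family of \Cref{lem:perfect_hash_family}, with $|\Sigma|=h^2\ge k$ and $m=h^{O(1)}\log n$. By the size calculation displayed before \Cref{cor:mmsafour_eth}, the output $C'$ has size and construction time $(h^{O(1)}\log n\cdot h^{2k}\cdot|C|)^{O(1)}=n^{O(1)}$, which is polynomial and in particular FPT. By \Cref{thm:mmsa_reduction_phf}:
\begin{itemize}
\item if $\operatorname{OPT}(C)\le k$ then $\operatorname{OPT}(C')\le k$;
\item otherwise $\operatorname{OPT}(C')>h\ge k\cdot n^{1/(2k)}$ for $n$ large relative to $k$.
\end{itemize}
Small $n$ can be brute-forced in FPT time. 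Since $C'$ is on the same $n$ variables (its gate widths are polynomial, padded as in the footnote to the definition), an FPT algorithm distinguishing $\operatorname{OPT}\le k$ from $\operatorname{OPT}>k\cdot n^{1/(2k)}$ on $k$-MMSA$_4$ would decide $k$-SetCover in FPT time. This would imply $\wtwo=\fpt$.

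This establishes the hypothesis of \Cref{lem:mmsa_padding} for $t=4$ with $f(n)=n$ and $g(k)=2k$. The function $f(n)=n$ is computable, nondecreasing and unbounded, and it is padding-stable since $n+a\le n^2$ for large $n$. The lemma then yields that, for every computable $\alpha(k)=o(1)$, no FPT algorithm distinguishes $\operatorname{OPT}\le k$ from $\operatorname{OPT}>k\cdot n^{\alpha(k)}$. That is, there is no FPT $n^{o(1)}$-approximation.

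The main point needing care is the interface with the padding lemma. The approximation factor there is measured in the variable count $n$ of the MMSA instance, so it matters that \Cref{thm:mmsa_reduction_phf} keeps the same $n$ variables rather than the blown-up circuit size. It also matters that the gap threshold is stated as $k\cdot f(n)^{1/g(k)}$; absorbing the factor $k$ into $h$ for large $n$ handles this. Beyond that, the only check needed is that choosing $h=n^{\Theta(1/k)}$ keeps the reduction polynomial, which the displayed bound $\binom{|\Sigma|}{k}\le h^{2k}=n^{O(1)}$ already guarantees.
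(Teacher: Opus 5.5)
Your proposal is correct and follows the paper's own argument. You apply \Cref{thm:mmsa_reduction_phf} with $t=2$ to $k$-SetCover, using the perfect hash family of \Cref{lem:perfect_hash_family} with $h=n^{\Theta(1/k)}$ and $|\Sigma|=h^2$; this gives an $n^{\Theta(1/k)}$ gap for \mmsafour{} under $\wtwo\neq\fpt$, which you then boost to $n^{o(1)}$ via \Cref{lem:mmsa_padding} with $f(n)=n$ and $g(k)=\Theta(k)$, with your extra bookkeeping (brute-forcing small $n$ so that $h\ge k\cdot n^{1/(2k)}$, and checking padding-stability) filling in details the paper leaves implicit.
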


\Cref{cor:mmsafour_eth} and \Cref{cor:mmsafour_wtwo} improve upon  \Cref{lem:klmkl_mmsa4} and \Cref{lem:marx13kmmsa4}, respectively.

\paragraph*{Acknowledgements.}
We thank Yican Sun for helpful early discussions with the authors on the $k$-MMSA problem.

\bibliographystyle{alpha}
\bibliography{ref}

\end{document}